







\documentclass{article} 
\usepackage{arxiv}
\usepackage{amsthm}

\usepackage{graphicx}
\usepackage{natbib}        
\usepackage{hyperref}
\usepackage{tikz-cd}
\usepackage{subfig}

\usepackage{xcolor}  






\usepackage{amssymb}

\usepackage{amsmath}
\usepackage{mathdots}

\usepackage{mathtools}


\usepackage{tensor}
\usepackage{xifthen}

\usepackage{urwchancal}
\DeclareFontFamily{OT1}{pzc}{}
\DeclareFontShape{OT1}{pzc}{m}{it}{<-> s * [1.10] pzcmi7t}{}
\DeclareMathAlphabet{\mathpzc}{OT1}{pzc}{m}{it}

\usepackage{graphicx}
\usepackage{ifpdf}
\ifpdf
\usepackage{epstopdf}
\epstopdfsetup{update,prepend}
\PrependGraphicsExtensions{.svg}
\DeclareGraphicsRule{.svg}{pdf}{.pdf}{
  `inkscape -z -D #1 --export-pdf=\noexpand\OutputFile
}
\fi

\newtheorem{theorem}{Theorem}[section]
\newtheorem{lemma}[theorem]{Lemma}

\newtheorem{definition}[theorem]{Definition}
\newtheorem{example}[theorem]{Example}
\newtheorem{remark}[theorem]{Remark}




\providecommand{\R}{\mathbb{R}}


\providecommand{\SO}{\mathbf{SO}}

\providecommand{\SE}{\mathbf{SE}}

\providecommand{\grpG}{\mathbf{G}}



\providecommand{\gothg}{\mathfrak{g}}

\providecommand{\gothX}{\mathfrak{X}} 

\providecommand{\so}{\mathfrak{so}}




\providecommand{\calL}{\mathcal{L}}
\providecommand{\calM}{\mathcal{M}}
\providecommand{\calN}{\mathcal{N}}

\providecommand{\calR}{\mathcal{R}}

\providecommand{\calT}{\mathcal{T}}

\providecommand{\vecL}{\mathbb{L}}

\providecommand{\vecV}{\mathbb{V}}
\providecommand{\vecW}{\mathbb{W}}





\providecommand{\tT}{\mathrm{T}} 


%





\providecommand{\Id}{I} 







\providecommand{\tL}{\mathrm{L}} 
\providecommand{\tR}{\mathrm{R}} 


\DeclareMathOperator{\tr}{tr}


\DeclareMathOperator{\Ad}{Ad}
\DeclareMathOperator{\ad}{ad}





\DeclareFontEncoding{LS1}{}{}
\DeclareFontSubstitution{LS1}{stix}{m}{n}
\DeclareSymbolFont{stixletters}{LS1}{stix}{m}{it}
\DeclareMathAccent{\cev}{\mathord}{stixletters}{"91}
\DeclareMathAccent{\vec}{\mathord}{stixletters}{"92}
\DeclareMathAccent{\vecev}{\mathord}{stixletters}{"95}


\providecommand{\td}{\mathrm{d}}
\providecommand{\tD}{\mathrm{D}}
\providecommand{\ddt}{\frac{\td}{\td t}}









\usepackage{accents}
\usepackage{mathtools}
\makeatletter
\providecommand{\scirc}{%
    \hbox{\fontfamily{\rmdefault}\fontsize{0.4\dimexpr(\f@size pt)}{0}\selectfont{\raisebox{-0.52ex}[0ex][-0.52ex]{$\circ$}}}}

\makeatother

\makeatletter
\providecommand{\ucirc}{%
    \hbox{\fontfamily{\rmdefault}\fontsize{0.4\dimexpr(\f@size pt)}{0}\selectfont{\raisebox{0.0ex}[0ex][-0.52ex]{$\circ$}}}}

\makeatother

\mathchardef\mhyphen="2D

\providecommand{\idx}[5][]{
\ifthenelse{\isempty{#1}}
{\tensor*[_{#4}^{#3}]{#2}{_{#5}}}
{\tensor*[_{#4}^{#3}]{#2}{^{#1}_{#5}}}
}








\providecommand{\etal}{\textit{et al.~}}


\usepackage{url}

\newcommand{\pp}[2]{ \frac{\partial #1}{\partial #2}}
\newcommand{\Gstar}{\grpG^\ltimes_{\gothg^\ast}}
\newcommand{\gstar}{\gothg^\ltimes_{\gothg^\ast}}

\newcommand{\inert}{\mathbb{I}}

\newcommand{\dd}[2]{ \frac{\delta #1}{\delta #2}}
\newcommand{\ddP}[1]{ \frac{\delta #1}{\delta P}}

\newcommand{\evalat}[2]{\left. #1 \right|_{#2}}
\newcommand{\evaldat}[2]{\left. \tD_{#1} \right|_{#2}}






\begin{document}

\title{Equivariant Tracking Control for Fully Actuated Mechanical Systems on Matrix Lie Groups}
\headertitle{Equivariant Tracking Control for Fully Actuated Mechanical Systems on Matrix Lie Groups}

\author{
\href{}{\includegraphics[scale=0.06]{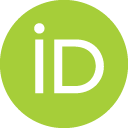}\hspace{1mm}
Matthew Hampsey}
\\
	Systems Theory and Robotics Group \\
	Australian National University \\
    ACT, 2601, Australia \\
	\texttt{matthew.hampsey@anu.edu.au} \\
	\And	\href{https://orcid.org/0000-0003-4391-7014}{\includegraphics[scale=0.06]{orcid.png}\hspace{1mm}
Pieter van Goor}
\\
	Robotics and Mechatronics Department,\\
    University of Twente,\\
   7500 AE Enschede, The Netherlands.\\
    \texttt{ p.c.h.vangoor@utwente.nl} \\
\And	\href{https://orcid.org/0000-0002-5746-7096}{\includegraphics[scale=0.06]{orcid.png}\hspace{1mm}
    Ravi Banavar}
\\
Centre for Systems and Control,\\
Indian Institute of Technology Bombay, India\\
\texttt{banavar@iitb.ac.in}
\And	\href{https://orcid.org/0000-0002-7803-2868}{\includegraphics[scale=0.06]{orcid.png}\hspace{1mm}
Robert Mahony}
\\
	Systems Theory and Robotics Group \\
	Australian National University \\
	ACT, 2601, Australia \\
	\texttt{robert.mahony@anu.edu.au} \\
}

\maketitle

\title{Equivariant Tracking Control for Fully Actuated Mechanical Systems on Matrix Lie Groups\thanksref{footnoteinfo}}

\begin{abstract}
Mechanical control systems such as aerial, marine, space, and terrestrial robots often naturally admit a state-space that has the structure of a Lie group.
The kinetic energy of such systems is commonly invariant to the induced action by the Lie group, and the system dynamics can be written as a coupled ordinary differential equation on the group and the dual space of its Lie algebra, termed a \emph{Lie-Poisson} system.
In this paper, we show that Lie-Poisson systems can also be written as a left-invariant system on a semi-direct Lie group structure placed on the trivialised cotangent bundle of the symmetry group.
The authors do not know of a prior reference for this observation and we are confident the insight has never been exploited in the context of tracking control.
We use this representation to build a right-invariant tracking error for the full state of a Lie-Poisson mechanical system and show that the error dynamics for this error are themselves of Lie-Poisson structure, albeit with time-varying inertia.
This allows us to tackle the general trajectory tracking problem using an energy shaping design metholodology.
To demonstrate the approach, we apply the proposed design methodology to a simple attitude tracking control.
\end{abstract}

\section{Introduction}
Trajectory tracking is a core system capability for all robotic systems \citep{SLOTINE1989509,spongRobotModelingControl2020}.
The classical solution to this problem uses a feed-forward computed torque control to enforce Euler-Lagrange or Hamiltonian error dynamics on an error system, and then applies passive damping injection to stabilise the resulting error system \citep{markiewiczAnalysisComputedTorque1973,paulRobotManipulatorsMathematics1981}.
The classical control design is effective, however, it depends heavily on the quality of the model in computing the feed-forward input, and this input cancels nonlinearities in the system to force the error system into the Euler-Lagrange structure, leading to potential wasted control action even if the model is exact.
To address the limitations of this design methodology, Fujimoto \etal~\citep{fujimotoTrajectoryTrackingControl2001} considered port Hamiltonian systems and used generalised canonical transforms to find error coordinates (by solving a specific set of partial differential equations)
for which the error system is also Hamiltonian.
The IDA-PBC algorithms of Ortega \etal~\citep{ORTEGA2004432,ortegaInterconnectionDampingAssignment2002} achieve a similar outcome for a class of Euler-Lagrange and port-Hamiltonian control systems.
Mahony \citep{mahonyNovelPassivitybasedTrajectory2019} introduced a 1-dimensional dynamics system that replicated the desired trajectory dynamics and used energy shaping and energy pumping between the true system and the new desired system to achieve tracking control.
All these papers are formulated in generalised coordinates while much of the nonlinearity of robot dynamics comes from their inherently non-linear state-spaces $\SE(3)$ or $\SO(3)$.
For systems with state-spaces on Lie groups, the group is normally a natural symmetry of the system, leading to the Lie-Poisson equations -- a left trivialised version of Hamilton's equations for a system on a Lie group with invariant inertia tensor \citep{marsden1994introduction}.
Attitude control on $\SO(3)$ for satellites and aerial robots
was studied in \citep{meyerDesignGlobalAnalysis1971a,Koditschek1989TheAO,chaturvediRigidBodyAttitudeControl2011,bayadiAlmostGlobalAttitude2014}.
General fully-actuated mechanical control systems on Lie groups were considered in \citep{bulloGeometricControlMechanical2005} and the stabilisation of general mechanical systems on Lie groups with nonholonomic constraints was studied in \citep{chandrasekaranGeometricPIDController2024}.
The geometry underlying these developments is well covered in \citep{murrayMathematicalIntroductionRobotic1994,lynchModernRobotics2017a}.
An interesting extension that is less commonly studied is that where the Hamiltonian is time-varying.
For example a satellite systems with changing configuration has time-varying inertia \citep{doi:10.2514/6.2014-4455}.
Another case study of interest is autonomous refueling \citep{haitaoModelingSimulationTimevarying2016} or control of a rocket \citep{gantmacherEquationsMotionRocket1950}, where the changing mass of the system leads to time-varying inertia.

Recently, the Lie group structure of robotic systems has been exploited in the development of equivariant observer and filter theory \citep{mahonyNonlinearComplementaryFilters2008,vangoorEquivariantFilterEqF2022,geEquivariantFilterDesign2022}.
Interestingly, the insight obtained by studying the observer problem has led to new perspectives on the tracking control problem as shown by recent work \citep{hampseyExploitingEquivarianceDesign2024,hampseySpatialGroupError2024,hampseyTrackingControlHomogeneous2023,weldeAlmostGlobalAsymptotic2024}.
In particular, the use of the spatial, or right-invariant, error contrasts to prior work in control literature, where the body, or left-invariant, error had been universally preferred.
For example, Bullo and Murray \citep{murray1995proportional} state that the ``reciprocal (right-invariant in our language) [$\ldots$] error [$\ldots$] depend[s] on the arbitrary choice of inertial frame'' to argue against using the error.
However, using the right invariant, or spatial, error in certain applications makes the error system synchronous \citep{trumpfAnalysisNonLinearAttitude2012,vangoorSynchronousModelsFundamental2025} and has been shown to lead to lower energy control action for the same performance \citep{hampseySpatialGroupError2024,hampseyExploitingDifferentSymmetries2023}.

In this paper, we derive a stabilising control law for time-varying Lie-Poisson systems.
We review the (trivialised) cotangent semi-direct symmetry \citep{marsdenSemidirectProductsReduction1984} and show that Lie-Poisson systems are left invariant with respect to this symmetry.
We then show that the induced right-invariant error system for the tracking task is also such a Lie-Poisson system, and thus the tracking task reduces to the stabilisation problem of a Lie-Poisson (error) system with time-varying inertia.
We use this to derive a tracking control law that consists of a feedforward term to compensate time-variation in the inertia, along with energy shaping and passive damping.
We explore the derived control with an example system and verify the performance with simulation studies.
The main results of the paper
\begin{enumerate}
    \item Introduce a symmetry group $\grpG^\ltimes_{\gothg^\ast}$ and group action for the trivialised cotangent bundle $\tT^\ast \grpG \cong \grpG \times \gothg^\ast$ that makes Lie-Poisson systems left-invariant.
    \item Introduce a right-invariant error between two dynamical Lie-Poisson systems and show that the resulting error system is itself a Lie-Poisson system with a time-varying inertia tensor.
    \item Propose a simple tracking control based on energy shaping and damping injection to stabilise the error system.
\end{enumerate}

The paper is structured as follows.
In Section \ref{sec:preliminaries}, the notation used throughout the paper is established, and some mathematical preliminaries required for the main development are revised.
In Section \ref{sec:lp-systems} we revise Lie-Poisson systems, the main dynamical system of interest in this paper.
We extend some known results regarding stabilisation to Lie-Poisson systems with time-varying inertias.
In Section \ref{sec:extended_lp_systems}, we develop the idea of extended Lie-Poisson systems in order to investigate the equivariant symmetries of these dynamical systems and define appropriate error functions.
In Section \ref{sec:tracking_control}, we show that the developed error system is itself a Lie-Poisson system, and apply the proposed stabilising control to this system to solve the tracking problem.
In Section \ref{sec:so3}, we present our proposed control design on a system posed on $\SO(3)$ and provide a tracking control simulation study.

\section{Preliminaries}
\label{sec:preliminaries}
Let $\vecV$ denote a (finite-dimensional) vector space and let  $\vecV^\ast$ denote its dual space; that is, the set of linear functionals acting on $\vecV$.
Let $P \in \vecV^\ast$ and $V \in \vecV$, then the contraction $P[V]$ of $P$ and $V$ is written $\langle P, V \rangle = \langle V, P \rangle$ (where $V$ and $(V^\ast)^\ast$ are identified).
If $\vecV, \vecW$ are vector spaces and $A : \vecV \to \vecW$ is a linear map, $W = A[V]$ where square brackets indicate linearity of the operator, then $A^\ast$ denotes the dual map $A^\ast : \vecW^\ast \to \vecV^\ast$ defined by
\begin{align}
    (A^\ast[P])[V] =  \langle A^\ast[P], V \rangle = \langle P, A[V] \rangle   = P[A[V]] \label{eq:dual_map}
\end{align}
for all $P \in \vecV^\ast$, $V \in \vecV$.
If compatible bases for $\vecV$ and $\vecV^\ast$ are used then $A^\ast = A^\top$, where we overload the notation $A$ to be the matrix representation of the operator as well as the abstract operator.

Let $\calM$ and $\calN$ denote smooth manifolds.
Denote the tangent space at an arbitrary point $Q \in \calM$ by $\tT_{Q}\calM$.
For a smooth function $h : \calM \to \calN$ the notation
\begin{align*}
    \evaldat{Z}{Q} h(Z):\tT_{Q}\calM\to \tT_{h(Q)}\calN \\
    V \mapsto \evaldat{Z}{Q} h(Z)[V],
\end{align*}
denotes the derivative of $h(Z)$ evaluated at $Z = Q$ in the direction $V \in \tT_{Q}\calM$.
When the basepoint and argument are implied the notation $\tD h$ will also be used for simplicity.

The space of smooth vector fields on $\calM$ is denoted by $\gothX(\calM)$.
Given a tangent bundle $\tT M$, there is an additional natural vector bundle (called the cotangent bundle) $\tT ^\ast \calM$ such that $\tT^\ast_p \calM$ is the dual space of $\tT_p \calM$.

Let $\grpG$ denote an $m$-dimensional matrix Lie group and denote the identity element with $\Id$.
The \emph{Lie algebra}, $\gothg$ of $\grpG$, is identified with the tangent space of $\grpG$ at identity, $\gothg\simeq \tT_{\Id}\grpG  \subset \R^{m \times m}$.
Given arbitrary $X \in \grpG$, \emph{left translation by $X$} is written
\mbox{$\mathrm \tL_X : \grpG \to \grpG$}, \mbox{$\mathrm \tL_X(Y) = XY.$}
This induces an isomorphism \mbox{$\tD \mathrm \tL_X : \gothg \to \tT_X \grpG$}.
For a matrix Lie group we will write $\tD \mathrm \tL_{X} U = XU$.
Similarly, \emph{right translation by $X$} \mbox{$\mathrm \tR_X(Y) = YX$}
induces an isomorphism \mbox{$\tD \mathrm \tR_X : \gothg \to \tT_X \grpG$},
written $\tD \mathrm \tR_{X}U = UX$.
Given $X \in \grpG$, the adjoint map \mbox{$\Ad_X: \gothg \to \gothg$} is defined by
$\Ad_X U := \tD \tL_X \, \tD \tR_{X^{-1}}[U]$ and can be written \mbox{$\Ad_X U = X U X^{-1}$} for a matrix Lie group.
In this paper, we use adjoint operators on different groups and the reader should take care to check which group and associated algebra the subscript and operand come from to infer the appropriate adjoint.
Given $U \in \gothg$, the ``little'' adjoint map \mbox{$\ad_U: \gothg \to \gothg$} is defined by the matrix commutator \mbox{$\ad_U V = UV - VU$}.
The operator $\ad^\ast : \gothg \times \gothg^\ast \to \gothg^\ast$ is defined by
\begin{align}
    \langle \ad^\ast_V P , U \rangle = \langle P, \ad_V U \rangle = \langle P, [V,U] \rangle.
    \label{eq:ad_star}
\end{align}
From \eqref{eq:dual_map}, if compatible bases are used for $\gothg$ and $\gothg^\ast$, then one has $\ad^\ast_V = \ad^\top_V$.
Given a function $h : \gothg^\ast \to \R$, one defines $\dd{h}{P} \in \gothg$ via the implicit relationship \citep{marsden1994introduction}
\begin{align}
    \left\langle W, \dd{h}{P} \right\rangle = \evalat{\ddt}{t=0} h(P + Wt)  = \tD h [W],
\label{eq:algebra_ident}
\end{align}
for all $W \in \gothg^\ast$.
Note that this is the same as identifying $\gothg \cong \tT^\ast_P (\gothg^\ast)$ for the derivative $\dd{h}{P} \coloneqq \tD h \in \tT^\ast_P (\gothg^\ast)$.

Let $\grpG$ be a Lie group and $\calM$ a smooth manifold.
A right action is a smooth function $\phi: \grpG \times \calM \to \calM$ satisfying the \emph{identity} and \emph{compatibility } properties:
\begin{align*}
     & \phi(\Id, Q)        = Q, & \phi(Y, \phi(X, Q)) = \phi(XY, Q)
\end{align*}
for all $Q \in \calM$ and $X \in \grpG$.
Given a fixed $Y \in \grpG$, the partial map $\phi_Y : \calM \to \calM$ is defined by $\phi_Y (Q) \coloneqq \phi(Y, Q)$.

For a manifold $\calM$, let $\gothX(\calM)$ denote the set of smooth vector fields on $\calM$.
Let $f : \vecL \to \gothX(\calM)$, $u \mapsto f_u$, be a dynamical system on the manifold $\calM$ with input space $\vecL$.
Let $\grpG$ act on $\calM$ by the action $\phi: \grpG \times \calM \to \calM$.
For any $Y \in \grpG$, define the vector field pushforward of $f_u \in \gothX(\calM)$ by $Y$ by
\begin{align*}
\phi_\ast (Y, f_u) \coloneqq \tD \phi_Y \circ f_u \circ \phi_{Y^{-1}} \in \gothX(\calM).
\end{align*}
The system $f$ is said to be \textit{equivariant} if there exists an input group action $\psi: \grpG \times \vecL \to \vecL$ such that
\begin{align*}
\phi_\ast (Y, f_u) = f_{\psi(Y, u)}
\end{align*}
for all $Y \in \grpG, u \in \vecL$.
In other words, Figure \ref{fig:equivariant3} commutes.
Note that this also implies that if $f$ is equivariant \citep{mahonyObserverDesignNonlinear2022}, then for all $Q \in \calM, Y \in \grpG$, we have
\begin{align}
\tD  \phi_Y \left[f_u(Q) \right] &= \tD  \phi_Y \left[f_u(\phi_{Y^{-1}} (\phi_Y (Q))) \right]\notag \\
    &= \phi_\ast (Y, f_u) (\phi_Y (Q)) \notag \\
    &= f_{\psi(Y, u)}(\phi_Y(Q))
\label{eq:equivariance}.
\end{align}

\begin{figure}
\centering
\begin{tikzcd}[row sep=large,column sep=huge]
\vecL \arrow{r}{f} \arrow[swap]{d}{\psi_Y} & \gothX(\calM) \arrow{d}{\phi_\ast (Y,\cdot) } \\%
\vecL \arrow{r}{f}& \gothX(\calM)
\end{tikzcd}
\caption{Equivariance as seen as a transformation of the input space commuting with the transformation of vector fields $\phi_\ast(Y,f_u) := \tD \phi_Y \circ f_u \circ \phi_{Y^{-1}}$ induced by the group action $\phi$.}
\label{fig:equivariant3}
\end{figure}

\begin{definition}\label{def:navigation_function}
A Morse function \citep{shastri2011elements} is a function $\Upsilon : \calM \to \R$ for which every critical point $p \in \calM$ is regular: that is, for any point $p$ such that $\tD \Upsilon(p) = 0$ and any chart $\varphi: U \to \R^n$ (with $p \in U \subset \calM$ and $\varphi(p) = 0$), then the Hessian
\begin{align*}
    \frac{\partial^2 (\Upsilon \circ \varphi^{-1})}{\partial x^i \partial x^j}(0)_{i,j = 1,...,n}
\end{align*}
is non-degenerate.
A \textit{navigation function} \citep{Koditschek1989TheAO} is a proper Morse function with a unique local minimum.
In this way, these functions serve as suitable coordinate-free definitions of potentials.
\end{definition}

\section{Lie-Poisson Systems}
\label{sec:lp-systems}

Let $\grpG$ be a matrix Lie group and let $\gothg^\ast \cong \tT^\ast_{\Id} \grpG$ be the dual space of the Lie algebra $\gothg$.
Recalling \eqref{eq:ad_star}, consider a matrix ODE of the form
\begin{align}
    \begin{pmatrix}
        \dot{Q} \\
        \dot{P}
    \end{pmatrix} = \begin{pmatrix}
        Q V \\ \ad_{V}^\ast P + \tau
    \end{pmatrix} \label{eq:system},
\end{align}
where $(Q, P) \in \grpG \times \gothg^\ast$, $V \in \gothg$, and $\tau \in \gothg^\ast$.

Let $h: \gothg^\ast \times \R \to \R$ be a (possibly time-varying) function defined on the dual of the Lie algebra.
We will term this function the \textit{Hamiltonian} \citep{marsden1994introduction}.
Recalling \eqref{eq:algebra_ident} and identifying
\begin{align}\label{eq:VisddP}
V \coloneqq \ddP{h} \in \gothg,
\end{align}
then $(Q,P)$ correspond to the configuration and momentum states of a mechanical system and $\tau \in \gothg^\ast$ corresponds to an exogenous force.
In the mechanics literature, the uncontrolled dynamics $\dot{P} = \ad^\ast_{V} P$ (written as $\dot{P} = \ad^\ast_{\ddP{h}} P$) are referred to as the \textit{Lie-Poisson equation} \citep{marsden1994introduction} and the kinematics $\dot{Q} = Q V$ (written as $\dot{Q} = Q \ddP{h}$) are referred to as the \textit{reconstruction equation} \citep{cendraVariationalPrinciplesLie2003}.
If one defines a left-invariant Hamiltonian $H : \tT^\ast \grpG \to \R$ by $H(Q, \mu) := h(\tL^\ast_{Q} \mu)$ on the cotangent bundle, then these equations are equivalent to Hamilton's equations on $\tT^\ast \grpG$ \hbox{(i.e. $\dot{Q} = \pp{H(Q, \mu)}{\mu}, \dot{\mu} = -\pp{H(Q, \mu)}{Q}$)} \citep{cendraVariationalPrinciplesLie2003}.
In this paper, we emphasise that the correspondence $V \coloneqq \ddP{h}$ has the form of an additional constraint applied to a more general system \eqref{eq:system}.
In \S\ref{sec:extended_lp_systems} we will study the geometry of the generalised Lie-Poisson systems \eqref{eq:system} without the constraint \eqref{eq:VisddP}.

\begin{example}[Dynamics on $\SO(3)$]\label{ex:so3}
The Lie group $\SO(3)$ is the matrix Lie group defined by
\begin{align*}
    \SO(3) = \left\{ R \in \R^{3 \times 3} : R^\top R = \Id, \det(R) = 1 \right\}.
\end{align*}
The associated Lie algebra $\so(3)$ is the set of real $3 \times 3$ skew-symmetric matrices
\begin{align*}
    \so(3) = \left\{ U \in \R^{3 \times 3} : U + U^\top = 0 \right\}.
\end{align*}
Define the vector space isomorphism $(\cdot)^\times : \R^3 \to \so(3)$ by
\[
\begin{pmatrix}
    x_1 \\ x_2 \\ x_3
\end{pmatrix}^\times = \begin{pmatrix}
    0 & -x_3 & x_2 \\ x_3 & 0 & -x_1\\ -x_2 & x_1 & 0
\end{pmatrix},
\]
and write the inverse map as $(\cdot)^\vee : \so(3) \to \R^3$.
The Lie bracket can be computed
\begin{align*}
 [u^\times, v^\times] = u^\times v^\times - v^\times u^\times = (u \times v)^\times,
\end{align*}
and so imbuing $\R^3$ with the bracket $[u, v] = u \times v$ makes it isomorphic to $\so(3)$.
The dual space $\so^\ast (3)$ may also be identified with $\R^3$ (and $\so(3)$) via the dot product on $\R^3$: if $P \in \so^\ast (3)$, then it is identified with $\pi \in \R^3$ by $\langle P, u^\times \rangle = \pi^\top u$.
Thus, as an operator on $\R^3$, $\ad_u = u^\times$ and $\ad^\ast_u = (u^\times)^\top = -u^\times$.
Additionally, $\Ad_R \omega^\times = (R \omega)^\times \cong R \omega$, so $\langle \Ad^\ast_R P, \omega^\times \rangle = \langle P, \Ad_R \omega^\times \rangle \cong \pi^\top R \omega = (R^\top \pi)^\top \omega$, and $\Ad^\ast_R P \cong R^\top \pi$.

With these identifications, the Lie-Poisson equations on $\SO(3) \times \R^3$ become
\begin{subequations}
    \label{eq:rigid_body}
    \begin{align}
    \dot{R} &= R \omega^\times\\
    \dot{\pi} &= -\omega \times \pi + \tau,
\end{align}
\end{subequations}
where $\omega \coloneqq \dd{h}{\pi}$, for some $h : \R^3 \to \R$.
For a rigid-body, the kinetic energy Hamiltonian is given by $h(\pi) = \frac{1}{2} \pi^\top \inert^{-1} \pi$ \citep{marsden1994introduction},
where $\inert$ is the distance-weighted volume integral of the mass-density, $\inert = \int_{\mathcal{B}}  \rho(x)(\| x \|^2 \Id - x x^\top) \td^3 x$, and assumed to be non-degenerate.
Then $\omega = \dd{h}{\pi} = \inert^{-1}[\pi]$, $\pi = \inert \omega$, and the dynamics \eqref{eq:rigid_body} become the well-known (controlled) Euler equations for a rigid body \citep{marsden1994introduction}:
\begin{subequations}
    \label{eq:rigid_body2}
    \begin{align}
    \dot{R} &= R \omega^\times\\
    \inert \dot{\omega} &= -\omega \times \inert \omega + \tau.
\end{align}
\end{subequations}
\end{example}

\subsection{Energy shaping control of systems with invariant kinetic energy Hamiltonians}
\label{sec:stabilisation}
In this work, we are interested in the stabilisation and tracking control for a system \eqref{eq:system}.
We consider systems with kinetic energy Hamiltonian of the form
\begin{align}
    h(P, t) = \frac{1}{2}\langle P, \inert_t^{{-1}}[P]\rangle \label{eq:kinetic_energy},
\end{align}
where $\inert_t : \gothg \times \gothg \to \R$ is a time-parameterised, nondegenerate, positive definite (0, 2) \textit{inertia tensor}.
Here $\inert_t^{-1}$ is the inverse of the intrinsic mapping\footnote{
Any $(0,2)$-tensor $\calT$ induces a map $\calT^\flat : \vecV \to \vecV^\ast$, defined by $\langle \calT^\flat[x], y \rangle \coloneqq \calT(x, y)$.
The inverse of this map is often denote $\calT^\sharp = (\calT^\flat)^{-1}$.
To minimize notational complexity, we will simply write $\inert_t : \gothg \to \gothg^\ast$ and $\inert_t^{-1} : \gothg^\ast \to \gothg$.
}
$\inert_t : \gothg \to \gothg^\ast$ induced by $\inert_t$.
In this case, the velocity is given by $V = \ddP{h} = \inert^{-1}_t [P] \in \gothg$.
In general, because of the self-adjoint symmetry of $\inert$, we may write down $\inert_t = S_t^\ast \inert_0 S_t$ for some invertible $S_t : \gothg \to \gothg$ with $S_0 = \Id$.
To see this, choose a basis for $\gothg$ and then consider $\inert_t$ and $\inert_0$ as matrices and set $S(t) = \inert^{-1/2}_0 \inert_t^{1/2}$.
Since $\dot{\inert}_t = \dot{S}^\ast_t \inert_0 S_t + S^\ast_t \inert_0 \dot{S}_t$, straightforward computation yields
\begin{align*}
\dot{\inert}^{-1}_t
&= -\inert^{-1}_t \dot{S}^\ast_t S^{-\ast}_t - S^{-1}_t \dot{S}_t \inert^{-1}_t,
\end{align*}
where $S^{-\ast} = (S^\ast)^{-1}$.
Thus, the explicit time derivative of $h$ can be computed as
\begin{align}
    \frac{1}{2}\langle P, &\dot{\inert}^{^{-1}}_t \left[P\right] \rangle \notag\\
&= \frac{1}{2}\langle P, (-\inert^{-1}_t \dot{S}^\ast_t S^{-\ast}_t - S^{-1}_t \dot{S}_t \inert^{-1}_t ) \left[P\right] \rangle \notag \\
&= -\frac{1}{2}\langle P, \inert^{-1}_t \dot{S}^\ast_t S^{-\ast}_t [P] \rangle - \frac{1}{2}\langle P, S^{-1}_t \dot{S}_t \inert^{-1}_t \left[P\right] \rangle \notag \\
&= -\frac{1}{2}\langle \dot{S}^\ast_t S^{-\ast}_t [P], \inert^{-1}_t [P] \rangle - \frac{1}{2}\langle P, S^{-1}_t \dot{S}_t \inert^{-1}_t \left[P\right] \rangle \notag \\
&= -\frac{1}{2}\langle \dot{S}^\ast_t S^{-\ast}_t [P], \inert^{-1}_t [P] \rangle - \frac{1}{2}\langle \dot{S}^\ast_t S^{-\ast}_t P, \inert^{-1}_t \left[P\right] \rangle \notag \\
&= - \langle \dot{S}^\ast_t S^{-\ast}_t [P], \inert^{-1}_t [P] \rangle \label{eq:S_sym},
\end{align}
where we have used the fact that $\langle U, \inert^{-1}_t P \rangle = \langle P, \inert^{-1}_t U \rangle$ due to the self-adjoint symmetry of $\inert^{-1}$.

It is well-known that if $h$ has no explicit time-dependence, then the Lie-Poisson system \eqref{eq:system} can be stabilised to an arbitrary configuration $Q_0$ by energy shaping with a potential $\Upsilon$ and damping injection by some dissipative Rayleigh tensor $\mathcal{R}$ \citep{bulloGeometricControlMechanical2005,Koditschek1989TheAO,maithripalaAlmostglobalTrackingSimple2006}.
The case where $h$ has an explicit time-dependence is handled similarly.
\begin{theorem}
\label{thm:stability_time_var}
Consider the Lie-Poisson dynamics \eqref{eq:system} corresponding to a Hamiltonian $h$ of the form $h = \frac{1}{2}\langle P , \mathbb{I}_t^{{-1}} [P]\rangle = \frac{1}{2}\langle P , S^{-1}_t \inert_0^{{-1}} S^{-\ast} [P]\rangle$, where $\inert_t : \gothg \times \gothg \to \R_+$ is a nondegenerate positive-definite family of bilinear forms that is uniformly bounded above and below, with $\dot{\inert}_t$ bounded.
Let $\Upsilon : \grpG \to \R_+$ be a navigation function with minimum $Q_0 \in \grpG$ (Def.~\ref{def:navigation_function}) and let $\mathcal{R}$ be a positive-definite $(0, 2)$ tensor.
Define
\begin{subequations}
    \label{eq:control}
\begin{align}
    \tau_{\mathrm{ff}} &\coloneqq \dot{S}^\ast_t S^{- \ast}_t [P], \label{eq:control_tau_ff} \\
    \tau_{\mathrm{pd}} &\coloneqq - \mathcal{R}\left[V\right] -  \tL^\ast_Q \td \Upsilon (Q).
\end{align}
\end{subequations}
Then, the closed-loop system \eqref{eq:system} with control $\tau = \tau_{\mathrm{ff}} + \tau_{\mathrm{pd}}$ is locally asymptotically stable to $(Q_0, 0) \in \grpG \times \gothg^\ast$.
\end{theorem}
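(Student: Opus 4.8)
The plan is to use an energy-based Lyapunov argument with the total energy of the closed loop as the candidate function. I would take
$$\Lyap(Q, P, t) = h(P, t) + \Upsilon(Q) = \tfrac{1}{2}\langle P, \inert_t^{-1}[P]\rangle + \Upsilon(Q),$$
the natural sum of the (time-varying) kinetic energy and the shaping potential. Since $\inert_t$ is uniformly bounded above and below and $\Upsilon$ is a navigation function with an isolated, non-degenerate minimum at $Q_0$, this candidate is positive definite and decrescent with respect to the equilibrium $(Q_0, 0)$ on a neighbourhood, uniformly in $t$; these are exactly the hypotheses needed to feed a non-autonomous stability theorem.

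The key computation is the derivative of $\Lyap$ along the closed loop. Differentiating and using $V = \ddP{h} = \inert_t^{-1}[P]$ gives
$$\dot{\Lyap} = \langle V, \dot{P}\rangle + \frac{\partial h}{\partial t} + \langle \tL^\ast_Q \td\Upsilon(Q), V\rangle,$$
where the last term comes from $\dot{\Upsilon} = \tD\Upsilon(Q)[QV] = \langle \tL^\ast_Q\td\Upsilon(Q), V\rangle$. I would then substitute $\dot{P} = \ad^\ast_V P + \tau$ and use the antisymmetry identity $\langle V, \ad^\ast_V P\rangle = \langle P, [V, V]\rangle = 0$, so that $\langle V, \dot{P}\rangle = \langle V, \tau\rangle$. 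The crucial cancellations are: the explicit time-derivative term satisfies $\frac{\partial h}{\partial t} = -\langle \dot{S}^\ast_t S^{-\ast}_t[P], V\rangle = -\langle \tau_{\mathrm{ff}}, V\rangle$ by \eqref{eq:S_sym}, which is annihilated by the feedforward contribution $\langle V, \tau_{\mathrm{ff}}\rangle$; and the potential-gradient term is annihilated by the $-\tL^\ast_Q\td\Upsilon(Q)$ part of $\tau_{\mathrm{pd}}$. What remains is purely the damping term,
$$\dot{\Lyap} = -\langle V, \mathcal{R}[V]\rangle \le 0,$$
negative semidefinite since $\mathcal{R}$ is positive definite, vanishing precisely when $V = 0$, equivalently $P = 0$ by nondegeneracy of $\inert_t$.

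The main obstacle is passing from $\dot{\Lyap} \le 0$ to asymptotic convergence, because the system is non-autonomous (through $\inert_t$) and LaSalle's invariance principle does not apply directly. I would instead invoke a non-autonomous argument: the uniform bounds on $\inert_t$ together with boundedness of $\dot{\inert}_t$ guarantee that trajectories starting near $(Q_0, 0)$ remain in a compact sublevel set and that $\dot{\Lyap}$ is uniformly continuous along solutions, so Barbalat's lemma (equivalently the LaSalle--Yoshizawa theorem) yields $\langle V, \mathcal{R}[V]\rangle \to 0$ and hence $V \to 0$ and $P \to 0$. To close the argument for $Q$, I would examine the limiting dynamics on $\{V = 0\}$: there $\tau_{\mathrm{ff}} = 0$ and $\mathcal{R}[V] = 0$, so invariance forces $\dot{P} = -\tL^\ast_Q\td\Upsilon(Q) = 0$, i.e.\ $\td\Upsilon(Q) = 0$ since $\tL^\ast_Q$ is an isomorphism. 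Restricting to a small neighbourhood of $Q_0$, where $Q_0$ is the unique isolated critical point of the navigation function, this forces $Q \to Q_0$, giving local asymptotic stability of $(Q_0, 0)$.
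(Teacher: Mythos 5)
Your Lyapunov construction and derivative computation coincide exactly with the paper's proof: the same candidate $\Lyap(Q,P,t) = h(P,t) + \Upsilon(Q)$, the same cancellation of $\frac{\partial h}{\partial t}$ against $\langle \tau_{\mathrm{ff}}, V\rangle$ via \eqref{eq:S_sym}, the same conclusion $\dot{\Lyap} = -\mathcal{R}(V,V) \leq 0$, and the same use of the uniform bounds on $\inert_t$ together with the LaSalle--Yoshizawa theorem to get bounded trajectories, $V \to 0$, and hence $P \to 0$ by nondegeneracy. Up to that point your argument is sound and is the paper's argument.

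The gap is in your final step, the convergence $Q \to Q_0$. Having correctly observed that LaSalle's invariance principle is unavailable because the closed loop is non-autonomous, you then close the argument with an invariance argument anyway: ``examine the limiting dynamics on $\{V=0\}$ \ldots\ invariance forces $\dot{P} = -\tL^\ast_Q \td\Upsilon(Q) = 0$.'' For a non-autonomous system the limit set of a trajectory need not be invariant, so this is precisely the step that requires justification. Moreover, knowing $V \to 0$ and $P \to 0$ gives no information about $\dot{P}$ by itself: a function can converge to zero while its derivative does not, so in the closed-loop equation $\dot{P} = \ad^\ast_V P - \mathcal{R}[V] - \tL^\ast_Q \td\Upsilon(Q) + \dot{S}^\ast_t S^{-\ast}_t[P]$ the term $\tL^\ast_Q\td\Upsilon(Q)$ could a priori oscillate without decaying. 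The paper closes this hole with a \emph{second} application of Barbalat's lemma, applied to $P$ rather than to $\Lyap$: since the trajectory remains in a compact set and the closed-loop vector field is continuous there, $\dot{P}$ is uniformly continuous along the trajectory (Heine--Cantor), so $P \to 0$ implies $\dot{P} \to 0$; then, because $\ad^\ast_V P$, $\mathcal{R}[V]$, and $\dot{S}^\ast_t S^{-\ast}_t[P]$ all tend to zero (the last using boundedness of $\dot{S}^\ast_t S^{-\ast}_t$), the closed-loop equation forces $\tL^\ast_Q \td\Upsilon(Q) \to 0$, and hence $Q \to Q_0$ since $Q_0$ is the only critical point of $\Upsilon$ in the neighbourhood under consideration. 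Your proof becomes complete once the invariance sentence is replaced by this (or an equivalent) uniform-continuity argument.
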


\begin{proof}
Define the Lyapunov function
\begin{align*}
    \calL(Q, P, t) = h(P, t) + \Upsilon(Q),
\end{align*}
which is positive definite and has a minimum at $(Q,P) = (Q_0, 0)$.

The time derivative $\dot{\calL}$ is given by
\begin{align*}
\dot{\calL} & = \dot{h}(P, t) + \dot{\Upsilon}(Q) \\
& = \td h [\dot{P}]  + \frac{\partial h}{\partial t} + \left\langle \td \Upsilon(Q), \dot{Q} \right\rangle\\
& = \left\langle \dot{P}, \ddP{h} \right\rangle + \frac{\partial h}{\partial t} + \left\langle \td \Upsilon(Q), \dot{Q} \right\rangle\\
& = \left\langle \tau, V \right\rangle + \frac{\partial h}{\partial t} + \left\langle \td \Upsilon(Q), QV \right\rangle,
\end{align*}
where we have used the fact that $\langle V, \ad^\ast_V P \rangle = \langle [V, V], P \rangle = 0$.
Substituting in the control law $\tau = \tau_{\mathrm{ff}} + \tau_{\mathrm{pd}}$, the closed-loop $P$ dynamics become
\begin{align}
\dot{P} & = \ad^\ast_{V} P - \mathcal{R}[V]
- \tL^\ast_Q \td \Upsilon (Q) + \dot{S}^\ast_t S^{- \ast}_t [P]. \label{eq:closed_loop}
\end{align}
Computing the time-derviative of $\calL$ along trajectories of the system, one obtains
\begin{align*}
    \dot{\calL}
& = \left\langle \tau_{\mathrm{pd}}, V \right\rangle + \left\langle \tau_{\mathrm{ff}}, V \right\rangle                                                                      \\
& \qquad \qquad + \pp{h}{t}  + \left\langle \td \Upsilon(Q), Q V \right\rangle                                                                                       \\
& = \left\langle - \mathcal{R} \left( V\right) -  \tL^\ast_Q \td \Upsilon(Q), V \right\rangle                                                                          \\
& \qquad \left\langle \dot{S}^\ast_t S^{- \ast}_t [P] , V\right\rangle + \pp{h}{t}  +  \left\langle \td \Upsilon(Q), Q V \right\rangle \\
& = -\mathcal{R} \left( V, V \right) + \left\langle \dot{S}^\ast_t S^{- \ast}_t [P], V \right\rangle + \pp{h}{t}                                                                                                             \\
& \qquad -  \left\langle \tL^\ast_Q \td \Upsilon(Q),  V  \right\rangle +  \left\langle \td \Upsilon(Q),  \tL_Q V \right\rangle                                        \\
& = -\mathcal{R} \left( V, V \right),
\end{align*}
which is negative semi-definite.
Since $\Upsilon$ is a Morse function and $Q_0$ is a local minimum, there exists a compact set $\mathcal{D} \subset \grpG$ containing $Q_0$ and an upper bound $B > 0$ such that $0 \leq \Upsilon(Q) \leq B$ for all $Q \in \mathcal{D}$.
The bilinear form $\inert_t$ is uniformly bounded above and below in time, so $\inert_t^{-1}$ is also uniformly bounded above and below in time, and there exist constants $C_1, C_2 > 0$ such that
\begin{align*}
    C_1 \| P \|^2 \leq h(P, t) \leq C_2 \| P \|^2
\end{align*}
for all $t \geq 0$.
Then $ C_1 \| P \|^2 \leq \calL(Q, P, t) \leq B + C_2 \| P \|^2$ for all $(Q, P) \in \mathcal{D} \times \gothg^\ast$ and $t \geq 0$.
Then, by the Lasalle-Yoshizawa theorem \citep[Theorem 8.4]{khalilNonlinearSystems2002}, there exists a compact set $\mathcal{B} \subset \mathcal{D} \times \gothg^\ast$ such that if $(Q_0, P_0) \in \mathcal{B}$, then the trajectory $(Q(t), P(t))$ is bounded (and thus exists for all time) and satisfies $\mathcal{R}(V, V) \to 0$.
Thus, $V \to 0$ and because $V = \inert^{{-1}}_t[P]$ and $\inert_t$ is nondegenerate, it also holds that $P \to 0$.
The map $\dot{P}$ \eqref{eq:closed_loop} is continuous on the compact set $\mathcal{B}$ and so by the Heine-Cantor theorem \citep[Theorem 4.19]{rudin1976principles}, $\dot{P}$ is uniformly continuous along trajectories of the system.
Thus, by Barbalat's lemma \citep[Lemma 8.2]{khalilNonlinearSystems2002}, $P(t) \to 0$ implies that $\dot{P}(t) \to 0$.
Examining the closed-loop equations \eqref{eq:closed_loop}, this is only possible if $- \tL^\ast_Q \td \Upsilon(Q) \to 0$, which ensures that that $Q \to Q_0$.
\end{proof}

\begin{remark}
Due to the global topological structure of the Lie groups and the properties of the navigation functions used, the basin of attraction is typically very large.
For example, on $\SO(3)$ the stability is almost global \citep{chaturvediRigidBodyAttitudeControl2011}.
\end{remark}

\begin{example}[Stabilising control on $\SO(3)$]\label{ex:so3_stab}
\makebox[0.5cm]{}
\newline 
Consider dynamics on $\SO(3)$ as introduced in Example~\ref{ex:so3}.
A navigation function with minimum $I \in \SO(3)$ is given by \citep{Koditschek1989TheAO}
\begin{align*}
\Upsilon_{\SO(3)}(R(\theta, n)) & = \tr(K(\Id - R(\theta, n)))  \notag       \\
& = -(1-\cos\theta)\tr(K (n^\times)^2)
\end{align*}
where $K \in \R^{3 \times 3}$ is a diagonal symmetric matrix whose eigenvalues satisfy
$\lambda_i + \lambda_j > 0$ for $i \not= j \in \{1, 2, 3\}$ \citep{bulloGeometricControlMechanical2005}.
The eigenvalue condition ensures the Hessian of $\Upsilon$ is positive definite on $\so(3) \equiv \tT_I \SO(3)$.
Note that if $K = \Id$ then $\Upsilon_{\SO(3)}(R(\theta, n)) = 1 - \cos(\theta)$.
Define a Rayleigh dissipation by $\calR(v, w) := v^\top K_d w$, where $K_d \in \R^{3 \times 3}$ is positive-definite.
To generate the control term one must compute $\td \Upsilon_{\SO(3)} \in \tT^\ast \SO(3)$.
For arbitrary $u^\times \in \so(3)$, one has
\begin{align}
\langle \tL_R^\ast \td \Upsilon_{\SO(3)}, u^\times \rangle
& = \langle \td \Upsilon_{\SO(3)}, R u^\times \rangle \notag \\
&= -\tr(K_p R u^\times) \notag \\
& = -\tr\left(\frac{1}{2}(K_p R - R^\top K_p^\top) u^\times\right) \notag \\
& = u^\top \left(K_p R - R^\top K_p^\top\right)^\vee, \label{eq:vee_RK}
\end{align}
and so $\gothg^\ast \ni \tL_R^\ast \td \Upsilon_{\SO(3)} \cong \left(K_p R - R^\top K_p^\top\right)^\vee \in \R^3$.
Note that for the standard kinetic energy Hamiltonian, the inertia $\mathbb{I}$ is constant for and $\dot{S} = 0$.
It follows that $\tau_{\text{ff}} = 0$ \eqref{eq:control_tau_ff}.
Define
\begin{align*}
    \tau = -\left(K_p R - R^\top K_p^\top\right)^\vee - K_d \omega.
\end{align*}
This corresponds to the control proposed in Theorem~\ref{thm:stability_time_var} and it follows that the system is stabilised to the identity $\Id$ locally.
The proposed control corresponds to the standard control from the literature \citep{chaturvediRigidBodyAttitudeControl2011} and the system is known to be almost globally asymptotically stable and locally exponentially stable.
\end{example}

\section{Extended Lie-Poisson Systems}
\label{sec:extended_lp_systems}

We are interested in extending the stabilisation results of \S \ref{sec:stabilisation} to the tracking of smooth trajectories.
To this end we will study a novel form of the Lie-Poisson equation that we term the \emph{extended Lie-Poisson} equation and show that these equations are left invariant with respect to a group action on the cotangent bundle.
This will lead to structure that we exploit to define an equivariant error and solve the tracking problem.

The \emph{extended Lie-Poisson} system is defined to be the same ordinary differential equation \eqref{eq:system} without the identification \eqref{eq:VisddP}.
That is, where the velocity $V$ is treated as a free input decoupled from the momentum $P$.
The resulting system no longer corresponds directly to physical dynamics, however, as we see below it admits a natural symmetry.
The total input space is given by $\gothg \times \gothg^\ast$ where $(V, \tau) \in \gothg \times \gothg^\ast$ are \emph{free} inputs.
The behaviour of the extended Lie-Poisson system contains the Lie-Poisson dynamics as a sub-behaviour of the system, obtained by assigning $V = \ddP{h}$.

Define a general input $U \coloneqq (V, \tau) \in \gothg \times \gothg^\ast$, a state $X \coloneqq (Q, P) \in \grpG \times \gothg^\ast$, and a \textit{system function} $f : \gothg \times \gothg^\ast \to \gothX (\grpG \times \gothg^\ast)$,
\begin{align}
    f_{U}(X) \coloneqq f(U,X) := (Q V , \ad_V^\ast P + \tau) \label{eq:f_def}.
\end{align}
Then the extended system can be written in compact notation as \hbox{$\dot{X} = f_U(X)$}.

The state-space,  $\grpG \times \gothg^\ast$,  of the extended Lie-Poisson system can be provided with a semi-direct product structure that makes it into a Lie group
$\Gstar = \grpG \ltimes \gothg^\ast$ in its own right  \citep{holmEulerPoincareEquations1998,engo-monsenPartitionedRungeKutta2003,jayaramanBlackScholesTheoryDiffusion2020}.
Group multiplication is given by
\begin{align}
    (Q_1, P_1) (Q_2, P_2) = (Q_1 Q_2, \Ad_{Q_2}^\ast P_1 + P_2), \label{eq:group}
\end{align}
for $Q_1, Q_2 \in \grpG$ and $P_1, P_2 \in \gothg^\ast$.
The identity is given by $\Id = (\Id, 0)$ and the inverse operation is given by \hbox{$(Q, P)^{-1} = (Q^{-1}, -\Ad_{Q^{-1}}^\ast P)$}.
Note that while $\grpG^\ltimes_{\gothg^\ast}$ is a semi-direct product of matrix Lie groups, it is not a matrix Lie group itself as written, although a matrix representation of the action could be obtained by selecting a basis of $\gothg$.
The group multiplication can be motivated by noting that it is the combination of the coadjoint action $\Ad^\ast_{Q}$ (which acts canonically on $\gothg^\ast$) and a momentum translation.
The uncontrolled dynamics of the Lie-Poisson system are known to be invariant on orbits of the coadjoint action, a property that is related to the left-invariant structure of the extended Lie-Poisson system that we demonstrate below.

The Lie algebra of $\Gstar$ is given by the vector space
\begin{align*}
    \gothg^\ltimes_{\gothg^\ast} \coloneqq \tT_\Id (\grpG \times \gothg^\ast) \cong  \gothg \ltimes \gothg^\ast,
\end{align*}
with Lie bracket (Appendix \ref{sec:app_semidirect})
\begin{align*}
    [(V_1, P_1), (V_2, P_2)] = ([V_1, V_2], -\ad^\ast_{V_1} P_2 + \ad^\ast_{V_2} P_1 ).
\end{align*}
The adjoint map $\Ad : \Gstar \times \gstar \to \gstar$ is given by (Appendix \ref{sec:app_semidirect})
\begin{align*}
    \Ad_X U    & = \tD \tR_{X^{-1}}[\tD \tL_X[U]] \notag                                                \\
               & = (\Ad_{Q} V , \Ad^\ast_{Q^{-1}} (\ad^\ast_{V}  P + \tau)).
\end{align*}

\begin{theorem}
\label{prop:left-eq}
The extended Lie-Poisson \eqref{eq:system} equations are left-invariant in the semi-direct group structure $\grpG \ltimes \gothg^\ast$.
That is, recalling \eqref{eq:f_def},
\[
f_{U}(X) = \tD \tL_{X} [U],
\]
for $X \in \Gstar$, where $U \coloneqq (V,\tau) \in \gothg^\ltimes_{\gothg^\ast}$ lies in the algebra of $\Gstar$.
\end{theorem}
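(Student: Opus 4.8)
The identity to be established, $f_U(X) = \tD\tL_X[U]$ for every $X \in \Gstar$, is precisely the assertion that the system vector field $f_U$ coincides with the left-invariant vector field on $\Gstar$ generated by the algebra element $U = (V,\tau) \in \gstar$. The plan is therefore to compute $\tD\tL_X[U]$ directly from the semi-direct group multiplication \eqref{eq:group} and to verify that it agrees with the defining expression $f_U(X) = (QV,\,\ad_V^\ast P + \tau)$ in \eqref{eq:f_def}. Writing $X = (Q,P)$ and letting the group element $(Q',P')$ vary, the left translation reads $\tL_X(Q',P') = (QQ',\,\Ad_{Q'}^\ast P + P')$. To differentiate at the identity $(\Id,0)$ in the direction $U$, I would choose any smooth curve $t\mapsto(\alpha(t),\beta(t))$ in $\Gstar$ with $(\alpha(0),\beta(0)) = (\Id,0)$, $\dot\alpha(0)=V$, $\dot\beta(0)=\tau$, and evaluate the two components of $\evalat{\ddt}{t=0}\bigl(Q\alpha(t),\,\Ad_{\alpha(t)}^\ast P + \beta(t)\bigr)$. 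The first component is immediate: $\evalat{\ddt}{t=0} Q\alpha(t) = Q\dot\alpha(0) = QV$, matching the reconstruction part of $f_U$.

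The only substantive step is the second component, the derivative of the coadjoint action $t\mapsto \Ad_{\alpha(t)}^\ast P$ at the identity; this is where the $\ad^\ast$ term of the Lie-Poisson equations emerges from the group law, and it is the main (if modest) obstacle. I would handle it by pairing against an arbitrary $W\in\gothg$ and differentiating under the pairing. Using $\langle \Ad_{\alpha(t)}^\ast P, W\rangle = \langle P, \Ad_{\alpha(t)}W\rangle$ and the matrix-group identity $\Ad_{\alpha(t)}W = \alpha(t)W\alpha(t)^{-1}$, whose $t$-derivative at the identity is $\ad_V W = VW - WV$, one finds
\begin{align*}
\evalat{\ddt}{t=0}\langle \Ad_{\alpha(t)}^\ast P, W\rangle = \langle P, \ad_V W\rangle = \langle \ad_V^\ast P, W\rangle,
\end{align*}
where the last equality is the definition \eqref{eq:ad_star} of $\ad^\ast$. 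Since $W\in\gothg$ was arbitrary, this gives $\evalat{\ddt}{t=0}\Ad_{\alpha(t)}^\ast P = \ad_V^\ast P$, and adding $\dot\beta(0)=\tau$ produces the second component $\ad_V^\ast P + \tau$.

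Assembling the two components yields $\tD\tL_X[U] = (QV,\,\ad_V^\ast P + \tau) = f_U(X)$, as required. As an internal consistency check, the inner expression $\ad_V^\ast P + \tau$ obtained here appears verbatim inside the adjoint formula $\Ad_X U = (\Ad_Q V,\,\Ad_{Q^{-1}}^\ast(\ad_V^\ast P + \tau))$ recorded just before the theorem, exactly as it must since $\Ad_X = \tD\tR_{X^{-1}}\circ\tD\tL_X$. I would note that no properties beyond the group multiplication \eqref{eq:group} and the standard infinitesimal coadjoint relation are needed, so the argument is essentially a direct differentiation.
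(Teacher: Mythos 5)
Your proposal is correct and follows essentially the same route as the paper: the paper's proof simply cites the identity $\tD\tL_X[U] = (QV,\,\ad_V^\ast P + \tau)$ (equation \eqref{eq:DL}, computed in Appendix \ref{sec:app_semidirect} by differentiating the group product along the curve $(\exp(tV), t\tau)$) and compares it with \eqref{eq:f_def}, exactly as you do. Your inline derivation of the coadjoint derivative by pairing against arbitrary $W \in \gothg$ is just a slightly more explicit version of the same computation.
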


\begin{proof}
    From the definition of $U$ and equation \eqref{eq:DL}, we have
    \begin{align}
        \tD \tL_{X}[U]
         & = (Q V, \ad_{V}^\ast P + \tau) \label{eq:diff_left_trans}.
    \end{align}
    The result follows by comparing \eqref{eq:f_def} and \eqref{eq:diff_left_trans}.
\end{proof}

Let $Y \in \Gstar$ be arbitrary.
Then (cf.~Figure~\ref{fig:poissonEquivariant})
\begin{align*}
    \tD \tR_Y f_u(X) = \tD \tR_Y \tD \tL_X U &= \tD \tL_X \tD \tL_Y \tD \tL_{Y^{-1}} \tD \tR_Y U \\
                      & = \tD \tL_{XY} \Ad_{Y^{-1}} U                                    \\
                      & = f_{\Ad_{Y^{-1}} U}(\tR_Y X).
\end{align*}
In particular, the system \eqref{eq:system} is equivariant \eqref{eq:equivariance} with respect to the right-translation $\tR$ (state space $\phi$) and adjoint $\Ad$ (input $\psi$) group actions.
\begin{figure}
\centering
\begin{tikzcd}[row sep=large,column sep=huge]
\gstar \arrow{r}{f} \arrow[swap]{d}{\Ad_{Y^{-1}}} & \gothX(\Gstar) \arrow{d}{\tR_{\ast}(Y, \cdot)} \\%
\gstar \arrow{r}{f}& \gothX(\Gstar)
\end{tikzcd}
\caption{Equivariance of extended Lie-Poisson systems.}
\label{fig:poissonEquivariant}
\end{figure}
Note that the Adjoint operator $\Ad_{Y^{-1}} U$  here is taken on $\Gstar$ since $Y \in \Gstar$.

\section{Error Systems}

Let $\grpG \subset \R^{d\times d}$ be a $m$-dimensional matrix Lie group and $\gothg^\ast \subset \R^{d\times d} $ be the associated Lie coalgebra.
Let $h : \gothg^\ast \to \R$ be a Hamiltonian.
Consider a desired trajectory $(Q_d, P_d, V_d, \tau_d) : \R \to \grpG \times \gothg^\ast$ that satisfies of the Lie-Poisson system
\eqref{eq:system} with the constraint \eqref{eq:VisddP}.
Let $X(t) = (Q, P)$ be a system trajectory evolving according to the extended Lie-Poisson equations \eqref{eq:system}.
Specifically, we do not impose that \eqref{eq:VisddP} holds for the trajectory $X(t)$ in the following analysis, although the constraint will be applied during the control design undertaken later.
We wish to construct an error between $X(t)$ and $X_d(t)$.
In prior work, the most commonly used error is the left-invariant error \citep{murray1995proportional,leeGeometricTrackingControl2010,chaturvediRigidBodyAttitudeControl2011}
\begin{align*}
E_\text{prior} = (Q_d^{-1} Q, V - \Ad_{Q^{-1}Q_d}V_d),
\end{align*}
that is derived from the left-trivialised tangent group structure on $\grpG \times \gothg$ \citep{murray1995proportional}.
Note that here the Adjoint used is defined on the group $\grpG$ not on $\Gstar$.
The error system dynamics for $E_\text{prior}$ are not an extended Lie-Poisson system.
They are certainly not a Lie-Poisson system, and there is no underlying Hamiltonian that generates these dynamics.
The most common approach to control \citep{bulloTrackingFullyActuated1999} uses feedforward compensation to impose a certain form of passivity by cancelling non-linearities and then designs a stabilizing control for the compensated system.

In this document, we pose the error directly on $\grpG \ltimes \gothg^\ast$ (acting on the state $(Q,P)$ rather than $(Q,V)$) and use the group multiplication \eqref{eq:group} to generate the error.
In addition, we use a right-invariant Lie group error motivated by recent work in equivariant observer design \citep{mahonyObserverDesignNonlinear2022,vangoorEquivariantFilterEqF2022}
\begin{align}
    (Q_E,P_E) \coloneqq E & = X X_d^{-1} = \phi(X_d^{-1}, X)\notag                              \\
                    & = (Q Q_d^{-1}, \Ad^\ast_{Q_d^{-1}} (P - P_d)) \label{eq:error_def}.
\end{align}
The motivation for this choice is that left-invariant systems are \textit{right-equivariant} and the right-invariant error is synchronous \citep{vangoorSynchronousModelsFundamental2025}.

\begin{lemma}
    \label{lemma:error_dynamics}
Let $X(t) = (Q(t), P(t))$ and $X_d(t) = (Q_d(t), P_d(t))$ be the system and desired trajectories satisfying \eqref{eq:system}
for inputs $U = (V,\tau), U_d = (V_d, \tau_d) \in \gothg \times \gothg^\ast$.
Define $E = (Q_E, P_E)$ to be the control error \eqref{eq:error_def}.
Then the error dynamics have the form of an extended Lie-Poisson system
\begin{align}
\dot{Q}_E & = Q_E V_E, & \dot{P}_E & = \ad_{V_E}^\ast P_E + \tau_E, \label{eq:error_sys}
\end{align}
where
\begin{align}
U_E = (V_E, \tau_E) \coloneqq & \Ad_{X_d} (V - V_d, \tau - \tau_d) \label{eq:error_input}.
\end{align}
\end{lemma}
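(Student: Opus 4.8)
The plan is to leverage the left-invariance established in Theorem~\ref{prop:left-eq}. Since the extended Lie-Poisson equations on $\Gstar$ are \emph{exactly} the left-invariant flows $\dot{X} = \tD\tL_X[U]$, it suffices to show that the right-invariant error $E = X X_d^{-1} = \tR_{X_d^{-1}}(X)$ is itself a left-invariant flow, $\dot{E} = \tD\tL_E[U_E]$, whose left-trivialised input is $U_E = \Ad_{X_d}(U - U_d)$. Once this is shown, reading off $f_{U_E}(E)$ from the definition \eqref{eq:f_def} gives the componentwise form \eqref{eq:error_sys}, and $U_E = \Ad_{X_d}(U-U_d)$ is precisely \eqref{eq:error_input}.

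First I would restate the hypotheses in trivialised form: by Theorem~\ref{prop:left-eq}, the assumption that $X$ and $X_d$ satisfy \eqref{eq:system} with inputs $U$ and $U_d$ is equivalent to $\dot{X} = \tD\tL_X[U]$ and $\dot{X}_d = \tD\tL_{X_d}[U_d]$. I would then differentiate $E = \tR_{X_d^{-1}}(X)$ using the product rule on $\Gstar$, which splits $\dot{E}$ into a contribution through $X$ and one through $X_d^{-1}$. For the inverse factor I would use the derivative-of-inversion identity $\ddt(X_d^{-1}) = -\tD\tR_{X_d^{-1}}\tD\tL_{X_d^{-1}}[\dot{X}_d] = -\tD\tR_{X_d^{-1}}[U_d]$, and for the first contribution the commutativity of left and right translations, $\tD\tR_{X_d^{-1}}\tD\tL_X = \tD\tL_X\tD\tR_{X_d^{-1}}$. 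Combining the two terms collapses to
\begin{align*}
\dot{E} = \tD\tL_X\,\tD\tR_{X_d^{-1}}[U - U_d].
\end{align*}

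To identify the left-trivialised velocity $U_E = \tD\tL_{E^{-1}}[\dot{E}]$ I would use $E^{-1} = X_d X^{-1}$, compose the left translations via $\tD\tL_{E^{-1}}\tD\tL_X = \tD\tL_{E^{-1}X} = \tD\tL_{X_d}$, and recognise $\tD\tL_{X_d}\tD\tR_{X_d^{-1}} = \Ad_{X_d}$, yielding
\begin{align*}
U_E = \Ad_{X_d}[U - U_d].
\end{align*}
Because $\dot{E} = \tD\tL_E[U_E]$, Theorem~\ref{prop:left-eq} identifies the error flow with the extended Lie-Poisson system $f_{U_E}(E) = (Q_E V_E,\, \ad^\ast_{V_E}P_E + \tau_E)$, establishing \eqref{eq:error_sys} and \eqref{eq:error_input} simultaneously.

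The main obstacle is performing the translation bookkeeping correctly on $\Gstar$, which is \emph{not} a matrix group as written, so the familiar matrix identities for differentiating products and inverses cannot be invoked verbatim. I would therefore justify the inversion-derivative identity and the left/right commutativity abstractly, since both hold on any Lie group, or else pass to the matrix representation of the $\Gstar$-action noted after \eqref{eq:group} and carry out the same steps as ordinary matrix calculus. As an optional consistency check I would substitute the $\Gstar$-adjoint formula into $\Ad_{X_d}(U-U_d)$ to confirm $V_E = \Ad_{Q_d}(V - V_d)$ and $\tau_E = \Ad^\ast_{Q_d^{-1}}(\ad^\ast_{V-V_d}P_d + \tau - \tau_d)$, matching the component structure implied by \eqref{eq:group} and \eqref{eq:error_def}.
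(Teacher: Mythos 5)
Your proposal is correct and takes essentially the same approach as the paper: both proofs differentiate $E = X X_d^{-1}$ via the product rule and the inverse-derivative formula, invoke the left-invariance of Theorem~\ref{prop:left-eq}, and identify $\tD \tL_{X_d} \tD \tR_{X_d^{-1}} = \Ad_{X_d}$ to obtain $U_E = \Ad_{X_d}(U - U_d)$. The only difference is organisational — the paper routes each term through the vector-field pushforward/equivariance identity $\tD \tR_Y f_U = f_{\Ad_{Y^{-1}}U} \circ \tR_Y$ and then uses linearity of $f$ in $U$, whereas you commute the translation operators inline, collapse to $\tD \tL_X \tD \tR_{X_d^{-1}}[U - U_d]$, and left-trivialise at the end — but the underlying computation is identical.
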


\begin{proof}
Compute
\begin{align*}
\dot{E} & = \tD \tR_{X_d^{-1}} \dot{X} - \tD \tL_X \tD \tL_{X^{-1}_d} \tD \tR_{X^{-1}_d} \dot{X}_d    \\
& = \tD \tR_{X^{-1}_d} f_U (X) - \tD \tL_X \tD \tL_{X^{-1}_d} \tD \tR_{X^{-1}_d} f_{U_d}(X_d) \\
& = f_{\Ad_{X_d} U} (R_{X^{-1}_d} X )                                         \\
& \qquad \qquad - \tD \tL_{X X^{-1}_d} f_{\Ad_{X_d} U_d}(R_{X^{-1}_d} X_d)     \\
& = f_{\Ad_{X_d} U} (E)- \tD \tL_{E} f_{\Ad_{X_d} U_d}(\Id)                 \\
& = f_{\Ad_{X_d} U} (E) - f_{\Ad_{X_d} U_d}(E)                            \\
& = f_{\Ad_{X_d} (U-U_d)} (E)\\
& = f_{U_E}(X_E).
\end{align*}
\end{proof}

\section{Tracking Control for Lie-Poisson Systems}
\label{sec:tracking_control}

Lemma~\ref{lemma:error_dynamics} shows that the error dynamics \eqref{eq:error_sys} are associated with an extended Lie-Poisson system.
To show that this system is in fact Lie-Poisson then we must find a Hamiltonian $h_E(P_E,t)$ such that $V_E = \dd{h_E}{P_E}$.
If we reinstate the constraint $V = \ddP{h}$ on the desired trajectory (that is, that we assume the dynamics of the original system are Lie-Poisson), then this is possible.
This construction will lead to the definition of an ``error kinetic energy'' that forms the key element underlying the proposed tracking control design.

\begin{lemma}
\label{thm:error_Hamiltonian}
Let $(Q_d(t), P_d(t)): \R \to \grpG \ltimes \gothg^\ast$ be a solution of \eqref{eq:system} for a kinetic energy Hamiltonian $h = \frac{1}{2}\langle P, \inert^{{-1}} P \rangle$, with input $\tau_d: \R \to \gothg^\ast$.
Define a time-varying tensor
\[
    \bar{\inert}_t (V, W) \coloneqq  \inert( \Ad_{Q_d^{-1}(t)} V, \Ad_{Q_d^{-1}(t)} W).
\]
Define $h_E : \gothg^\ast \to \R$ by
\[
h_E(P_E) := \frac{1}{2} \langle P_E, \bar{\inert}^{-1}_t [P_E] \rangle
\]
to be the kinetic energy Hamiltonian defined by $\bar{\inert}_t$.
Assume both $V \coloneqq \ddP{h(P)}$ and $V_d \coloneqq \dd{h(P_d)}{P_d}$.
Then the error system dynamics $(Q_E, P_E)$ \eqref{eq:error_sys} are Lie-Poisson dynamics induced by $h_E$ with constraint $V_E = \dd{h_E}{P_E}$ and free input $\tau_E$.
\end{lemma}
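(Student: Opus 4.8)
The extended Lie-Poisson form of the error dynamics is already furnished by Lemma~\ref{lemma:error_dynamics}, so the only thing that remains is to verify the constraint $V_E = \dd{h_E}{P_E}$ for the velocity $V_E$ appearing in \eqref{eq:error_sys}. The plan is to compute both sides explicitly and show they coincide, using the constraints $V = \ddP{h} = \inert^{-1}[P]$ and $V_d = \inert^{-1}[P_d]$ together with the fact that $\Ad$ on the base group $\grpG$ is a homomorphism.

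First I would read off $V_E$ from \eqref{eq:error_input}. Using the decomposition of the adjoint on $\Gstar$ recorded before Theorem~\ref{prop:left-eq}, the velocity component of $\Ad_{X_d}(V - V_d, \tau - \tau_d)$ is simply $V_E = \Ad_{Q_d}(V - V_d)$, where this $\Ad$ is the adjoint on $\grpG$. Next, since $h_E$ is the quadratic form associated with the symmetric tensor $\bar{\inert}_t$, a one-line variation gives $\dd{h_E}{P_E} = \bar{\inert}_t^{-1}[P_E]$; hence the claim is equivalent to the operator identity $\bar{\inert}_t[V_E] = P_E$.

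The key computation is to express $\bar{\inert}_t$ as a map $\gothg \to \gothg^\ast$. From its definition and the duality pairing, I would move the two copies of $\Ad_{Q_d^{-1}}$ onto one side to obtain $\bar{\inert}_t[\,\cdot\,] = \Ad^\ast_{Q_d^{-1}}\,\inert\,\Ad_{Q_d^{-1}}[\,\cdot\,]$. Substituting $V_E = \Ad_{Q_d}(V - V_d)$ and using $\Ad_{Q_d^{-1}}\Ad_{Q_d} = \Ad_{\Id} = \id$ collapses the inner adjoints, leaving $\bar{\inert}_t[V_E] = \Ad^\ast_{Q_d^{-1}}\inert[V - V_d]$. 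Finally, the constraints give $\inert[V] = P$ and $\inert[V_d] = P_d$, so $\bar{\inert}_t[V_E] = \Ad^\ast_{Q_d^{-1}}(P - P_d) = P_E$ by the definition \eqref{eq:error_def} of $P_E$. This is exactly $V_E = \bar{\inert}_t^{-1}[P_E] = \dd{h_E}{P_E}$, and since $\tau_E$ remains a free input, the error system is Lie-Poisson with Hamiltonian $h_E$.

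I expect the only delicate point to be the bookkeeping of which group each adjoint lives on: the $\Ad_{Q_d}$ produced by the $\Gstar$-adjoint in \eqref{eq:error_input} must be recognised as the base-group adjoint so that it cancels cleanly against the $\Ad_{Q_d^{-1}}$ built into $\bar{\inert}_t$. One should also confirm that $\bar{\inert}_t$ is genuinely nondegenerate and positive-definite, so that $\bar{\inert}_t^{-1}$ exists; this follows from $\inert$ being so and $\Ad_{Q_d^{-1}}$ being invertible. The explicit time dependence of $\bar{\inert}_t$ enters only through $Q_d(t)$ and plays no role in verifying the pointwise constraint.
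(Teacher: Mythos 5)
Your proposal is correct and takes essentially the same approach as the paper: both reduce the problem via Lemma~\ref{lemma:error_dynamics} to verifying the constraint, identify $V_E = \Ad_{Q_d}(V - V_d)$, substitute $V = \inert^{-1}[P]$, $V_d = \inert^{-1}[P_d]$ and $P - P_d = \Ad^\ast_{Q_d} P_E$, and conclude $V_E = \bar{\inert}_t^{-1}[P_E]$ (you check the equivalent forward identity $\bar{\inert}_t[V_E] = P_E$ rather than inverting, a purely cosmetic difference). The one place you are lighter than the paper is the freeness of $\tau_E$: the paper proves it by exhibiting the explicit inverse assignment $\tau := \tau_d + \Ad^\tau_{X^{-1}_d}(V_E, \tau_E)$, a formula it reuses in the tracking theorem, whereas you only assert that $\tau_E$ remains free; this is easily justified by invertibility of the affine map relating $\tau$ and $\tau_E$, so it is a minor omission rather than a gap.
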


\begin{proof}
From Lemma \ref{lemma:error_dynamics} the error dynamics are extended Lie-Poisson, so it remains to show that $V_E = \dd{h_E}{P_E}$ and verify $\tau_E$ is free.
The error velocity $V_E$ is given by $\Ad_{Q_d(t)}(V - V_d)$, the first component of \eqref{eq:error_input}.
Writing $V \coloneqq \ddP{h(P)} = \mathbb{I}^{-1} [P]$, $V_d \coloneqq \mathbb{I}^{-1} [P_d] = \dd{h(P_d)}{P_d}$ then $V_E = \Ad_{Q_d(t)}\inert^{-1}[P - P_d]$.
From \eqref{eq:error_def}, we may write $P-P_d = \Ad^\ast_{Q_d} P_E$, and so $V_E = \Ad_{Q_d(t)}\inert^{-1}[\Ad^\ast_{Q_d(t)} P_E] = \bar{\inert}^{-1}[P_E]$.
Thus, $h(P_E) = \frac{1}{2}\langle P_E, V_E \rangle$ and $V_E = \dd{h_E}{P_E}$.
For arbitrary $\tau_E$ one sets
\begin{align}
\tau := \tau_d + \Ad^\tau_{X^{-1}_d}(V_E, \tau_E)
\label{eq:system_input},
\end{align}
where $\Ad^\tau_{X_d}$ denotes the second ($\gothg^\ast$) component of $\Ad_{X_d}$, $V_E$ is defined as in \eqref{eq:error_input}.
By choosing $\tau$ in this manner, then any desired $\tau_E$ can be assigned, and this completes the proof.
\end{proof}

Note that the error inertia corresponds to a time-varying linear map $\bar{\inert}_t : \gothg \to \gothg^\ast$ that can be written
\[
    \bar{\inert}_t\coloneqq \Ad^\ast_{Q_d^{-1}(t)} \circ \inert \circ \Ad_{Q_d^{-1}(t)}.
\]
This is explicitly of the form $ S^\ast_t \inert S_t$, with $S_t = \Ad_{Q_d^{-1}(t)}$.
Then $\ddt Q_d(t) = Q_d(t) u_d(t)$ for some $u_d(t) : \R \to \gothg$, and one can compute \eqref{eq:S_sym}
\begin{align}
\dot{S}_t = - \ad_{u_d(t)} \circ \Ad_{Q_d^{-1}(t)} = - \Ad_{Q_d^{-1}(t)} \circ \ad_{\Ad_{Q_d(t)} u_d(t)} \label{eq:S_dot}.
\end{align}
In the following Theorem we will consider tracking control for a Lie-Poisson system.
Conceptually, we construct the error between true and desired systems exploiting the geometry of the extended Lie-Poisson system, and then restrict to only those trajectories where the velocity constraint \eqref{eq:VisddP} holds.
\begin{theorem}
\label{thm:tracking_control}
Consider true and desired trajectories $(Q,V), (Q_d, P_d) : \R \to \grpG \times \gothg^\ast$ that satisfy the Lie-Poisson system \eqref{eq:system} with the constraint \eqref{eq:VisddP} for a Hamiltonian $h : \gothg^\ast \to \R$ with inputs $\tau$ and $\tau_d$ respectively.
Assume $Q_d(t)$ is bounded, and $\Ad_{Q_d(t)}$ and $\Ad_{Q^{-1}_d(t)}$ are positively bounded above and below (see Remark \ref{remark:boundedness}).
Define the error \eqref{eq:error_def}.
Let $\Upsilon : \grpG \to \R_+$ be a navigation function with global minimum $(\Id , 0 ) \in \Gstar$ and let $\mathcal{R}$ be a positive-definite $(0, 2)$ tensor.
The closed-loop input $\tau$ is defined by \eqref{eq:system_input} where
$\tau_E := \tau_{E_{\mathrm{ff}}} + \tau_{E_{\mathrm{pd}}}$
with
\begin{subequations}
\label{eq:control_tracking}
\begin{align}
\tau_{E_{\mathrm{ff}}} &\coloneqq -\ad^\ast_{\Ad_{Q_d}U_d} P_E. \\
\tau_{E_{\mathrm{pd}}} &\coloneqq - \mathcal{R}\left[V_E\right] -  \tL^\ast_{Q_E} \td \Upsilon (Q_E).
\end{align}
\end{subequations}
Then the closed-loop error system \eqref{eq:error_sys} is locally asymptotically stable to $(\Id,0)$ and $Q(t) \to Q_d(t)$ and $P(t) \to P_d(t)$.
\end{theorem}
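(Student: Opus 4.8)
The plan is to recognise the closed-loop error system as precisely the time-varying Lie-Poisson stabilisation problem already solved in Theorem~\ref{thm:stability_time_var}, and thereby reduce the tracking claim to that result followed by a translation of the conclusion back to the original coordinates. By Lemma~\ref{lemma:error_dynamics} the error $E = (Q_E, P_E)$ satisfies an extended Lie-Poisson system with input $U_E = (V_E, \tau_E)$, and by Lemma~\ref{thm:error_Hamiltonian}, once the constraint \eqref{eq:VisddP} is reinstated on both trajectories, this is a genuine Lie-Poisson system generated by the error Hamiltonian $h_E(P_E) = \tfrac12\langle P_E, \bar{\inert}_t^{-1}[P_E]\rangle$ with velocity constraint $V_E = \dd{h_E}{P_E}$ and a free exogenous force $\tau_E$. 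The error inertia factors as $\bar{\inert}_t = S_t^\ast \inert S_t$ with $S_t = \Ad_{Q_d^{-1}(t)}$, which is exactly the structure assumed in Theorem~\ref{thm:stability_time_var}. Thus it suffices to check that the hypotheses of that theorem hold for $\bar{\inert}_t$ and that the prescribed force $\tau_E = \tau_{E_{\mathrm{ff}}} + \tau_{E_{\mathrm{pd}}}$ coincides with the stabilising control of Theorem~\ref{thm:stability_time_var} applied to the error system.

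The main step is to identify the feedforward term. Theorem~\ref{thm:stability_time_var} prescribes $\tau_{\mathrm{ff}} = \dot{S}^\ast_t S^{-\ast}_t[P_E]$. Using $S_t = \Ad_{Q_d^{-1}(t)}$ together with \eqref{eq:S_dot} (so that, since $\dot{Q}_d = Q_d V_d$, one has $\dot{S}_t = -\Ad_{Q_d^{-1}} \circ \ad_{\Ad_{Q_d} V_d}$ and $S_t^{-\ast} = \Ad^\ast_{Q_d}$), I would take duals, apply $\Ad^\ast_{Q_d^{-1}} \Ad^\ast_{Q_d} = \id$, and obtain $\dot{S}^\ast_t S^{-\ast}_t[P_E] = -\ad^\ast_{\Ad_{Q_d} V_d} [P_E]$, which is exactly $\tau_{E_{\mathrm{ff}}}$ (reading $\Ad_{Q_d} U_d$ as its $\gothg$-component $\Ad_{Q_d} V_d$). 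The conjugation identity $\Ad^\ast_{Q_d^{-1}} \ad^\ast_{V_d} \Ad^\ast_{Q_d} = \ad^\ast_{\Ad_{Q_d} V_d}$ provides a useful cross-check. The passive-damping term $\tau_{E_{\mathrm{pd}}} = -\mathcal{R}[V_E] - \tL^\ast_{Q_E} \td \Upsilon(Q_E)$ is verbatim the damping-plus-energy-shaping term of Theorem~\ref{thm:stability_time_var} for the state $(Q_E, P_E)$, with navigation function $\Upsilon$ whose minimum is $Q_0 = \Id$.

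It then remains to verify the boundedness hypotheses and translate the conclusion. Nondegeneracy and positive-definiteness of $\bar{\inert}_t$ follow from those of $\inert$ since $\Ad_{Q_d^{-1}}$ is invertible; uniform upper and lower bounds follow from the assumed bounds on $\Ad_{Q_d(t)}$ and $\Ad_{Q^{-1}_d(t)}$, and boundedness of $\dot{\bar{\inert}}_t$ follows from the expression for $\dot{S}_t$ together with boundedness of $V_d$ and $\Ad_{Q_d^{-1}}$ (cf.~Remark~\ref{remark:boundedness}). Theorem~\ref{thm:stability_time_var} then gives local asymptotic stability of $(Q_E, P_E)$ to $(\Id, 0)$. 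Finally, since $Q_E = Q Q_d^{-1} \to \Id$ and $Q_d$ is bounded, $Q - Q_d = (Q_E - \Id)Q_d \to 0$; and since $P_E = \Ad^\ast_{Q_d^{-1}}(P - P_d) \to 0$ with $P - P_d = \Ad^\ast_{Q_d} P_E$ and $\Ad^\ast_{Q_d}$ bounded above, we conclude $P \to P_d$. I expect the principal obstacle to be bookkeeping across the two group structures --- keeping straight whether each $\Ad$ and $\ad^\ast$ is taken on $\grpG$ or on $\Gstar$ --- and verifying the boundedness of $\dot{\bar{\inert}}_t$, which is where the desired-velocity boundedness hidden in Remark~\ref{remark:boundedness} is needed.
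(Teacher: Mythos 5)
Your proposal is correct and follows essentially the same route as the paper's own proof: reduce to Theorem~\ref{thm:stability_time_var} via Lemma~\ref{thm:error_Hamiltonian}, compute $\dot{S}^\ast_t S^{-\ast}_t = -\ad^\ast_{\Ad_{Q_d}V_d}$ from \eqref{eq:S_dot} to identify the feedforward term, verify the uniform boundedness hypotheses on $\bar{\inert}_t$, and translate $(Q_E,P_E)\to(\Id,0)$ back to $Q\to Q_d$, $P\to P_d$ using the boundedness assumptions. Your treatment of the bound verification and the back-translation is in fact slightly more explicit than the paper's.
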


\begin{proof}
By Lemma \ref{thm:error_Hamiltonian}, the error system \eqref{eq:error_def} is a Lie-Poisson system.
Recalling the definition of $\tau$ \eqref{eq:system_input} then in the error system \eqref{eq:error_sys} one has $\tau_E = \tau_{E_{\mathrm{ff}}} + \tau_{E_{\mathrm{pd}}}$.

Using \eqref{eq:S_dot}, one computes
\begin{align*}
\dot{S}^\ast S^{-\ast}_t &= (- \Ad_{Q_d^{-1}(t)} \circ \ad_{\Ad_{Q_d(t)} u_d(t)})^\ast (\Ad_{Q_d^{-1}(t)})^{-\ast}\\
& = - \ad^\ast_{\Ad_{Q_d(t)} u_d(t)} \Ad^\ast_{Q_d^{-1}(t)} \Ad^\ast_{Q_d(t)}\\
& = - \ad^\ast_{\Ad_{Q_d(t)} u_d(t)}.
\end{align*}
Comparing \eqref{eq:control} and \eqref{eq:control_tracking}, we note that these control laws are identical up to the variable substitutions $Q \leftrightarrow Q_E, P \leftrightarrow P_E$, $V \leftrightarrow V_E$ and $\tau \leftrightarrow \tau_E$,
with $\dot{S}^\ast S^{-\ast}_t$ taking the explicit form given above.
The term $\dot{\inert}_t$ is uniformly bounded if $\Ad_{Q_d(t)}$ and $U_d(t)$ are uniformly bounded.
Thus, by Theorem \ref{thm:stability_time_var}, $Q_E \to \Id$ and $P_E \to 0$.
If $\Ad_{Q_d^{-1}(t)}$ is uniformly bounded below then $P_E \to 0$ implies $P \to P_d$, and if $Q_d(t)$ is bounded then $Q_E \to \Id$ implies $Q \to Q_d$.
\end{proof}

\begin{remark}
\label{remark:boundedness}
For compact Lie groups or Lie groups that are the semidirect product of a compact Lie group with a finite-dimensional vector space (for example, $\SO(2)$, $\SO(3)$, $\SE(2)$, $\SE(3)$, $\SE_2(3)$, etc), it suffices for $Q_d(t)$ to be bounded.
Then lower and upper boundedness of $\Ad_{Q_d(t)}$ and $\Ad_{Q_d^{-1}(t)}$ are guaranteed by the compact component of the Lie group.
\end{remark}

\section{Attitude control on $\SO(3)$}
\label{sec:so3}

In this example we consider tracking control for the $\SO(3)$ dynamics introduced in Examples~\ref{ex:so3} and \ref{ex:so3_stab}.
We translate the abstract formula presented in \S\ref{sec:lp-systems} and \S\ref{sec:extended_lp_systems} into concrete algebraic formula written in matrix/vector coordinates.
This both provides an example of how to apply the theory presented in the paper, and also provides a comparison with prior work.

Recall the notation introduced in Example~\ref{ex:so3}.
The semi-direct group product on $\SO(3) \ltimes \R^3 \cong \SO(3) \ltimes \so^\ast(3)$ can be written
\begin{align*}
    (R_1, \pi_1)(R_2, \pi_2) = (R_1 \tR_2, \tR_2^\top \pi_1 + \pi_2).
\end{align*}
The inverse element is $(R, \pi)^{-1} = (R^\top, -R\pi)$.
Right multiplication on $\SO(3) \ltimes \R^3$ (the $\phi$ action) is given by
\begin{align*}
    \tR_{\left(R_Y, \pi_Y\right)} (R, \pi) = (R, \pi)(R_Y, \pi_Y) = (R R_Y, R_Y^\top \pi + \pi_Y).
\end{align*}
Finally, the adjoint map (the $\psi$ action) $\Ad: (\SO(3) \ltimes \R^3) \times  (\R^3 \times \R^3) \to \R^3 \times \R^3$ is given by
\begin{align*}
    \Ad_{\left(R_Y, \pi_Y\right)} (\omega, \tau) = ( R_Y \omega, R_Y (-\omega \times \pi_Y) + R_Y \tau).
\end{align*}
where $\R^3 \times \R^3 \cong \so(3) \times \so^\ast(3)$.

The error state \eqref{eq:error_def} and inputs \eqref{eq:error_input} are given by
\begin{align*}
    (R_E, \pi_E) &= \tR_{(R_d^\top, -R_d \pi_d)}(R, \pi) = (R R_d^\top, R_d (\pi - \pi_d))\\
    (\omega_E, \tau_E) &= \Ad_{(R_d, \pi_d)}(\omega-\omega_d, \tau-\tau_d)\\
                       &= (R_d (\omega - \omega_d),\\
                       & \qquad -R_d(\omega - \omega_d) \times \pi_d + R_d(\tau - \tau_d) ).
\end{align*}
From Theorem \ref{thm:tracking_control} the \textit{tracking} control law for the error input is given by
\begin{align*}
    \tau_E &= -\ad^\ast_{\Ad_{R_d}\omega_d} \pi_E -  \tL^\ast_{R_E} \td \Upsilon (R_E) - \mathcal{R}\left[\omega_E\right]\\
           &=  (R_d \omega_d) \times \pi_E - (K_p R_E - R_E^\top K_p^\top)^\vee -K_d \omega_E.
\end{align*}
Define $\tilde{\omega} = \omega - \omega_d$, $\tilde{\tau} = \tau - \tau_d$.
Unwinding the definitions of $R_E, \pi_E, \omega_E$ and $\tau_E$, one has
\begin{align*}
R_d \tilde{\tau} &= R_d \tilde{\omega} \times \pi_d
-K_d R_d \tilde{\omega} - (K_p R_E - R_E^\top K_p^\top)^\vee\\
&\qquad\qquad\qquad  +  (R_d \omega_d) \times \pi_E.
\end{align*}
Then the closed loop input $\tau$ is recovered using \eqref{eq:system_input}:
\begin{align}
\tau^\text{EqT} &= \tau_d +  \tilde{\omega} \times \inert \omega_d +  \omega_d \times R_d^\top \pi_E \notag \\
&\qquad\qquad\qquad - R_d^\top K_d R_d \tilde{\omega} - R_d^\top (K_p R_E - R_E^\top K_p^\top)^\vee \notag \\
&= \tau_d +  \tilde{\omega} \times \inert \omega_d +  \omega_d \times  \inert \tilde{\omega} \notag \\
&\qquad - R_d^\top K_d R_d \tilde{\omega} - R_d^\top (K_p R_E - R_E^\top K_p^\top)^\vee. \label{eq:so3_control}
\end{align}
We will refer to this as the \emph{Equivariant Tracking} (EqT) control.

To provide context we compare \eqref{eq:so3_control} with the right-invariant error form of the tracking controller drawn from Bullo \etal~\citep{bulloGeometricControlMechanical2005}.
We will refer to the Bullo \etal~controller as a Geometric Tracking (GT) tracking control.
In \citep{bulloGeometricControlMechanical2005}, the velocity tracking error is defined by
\begin{align*}
    K(v_E) = \frac{1}{2} \mathbb{G}_{R(t)} (v_E, v_E),
\end{align*}
where the kinetic energy metric is defined as $\mathbb{G}_{R} \coloneqq \tD \tL_R^\ast \inert \tD \tL_R$ for $\inert$ the body-fixed inertia introduced in Example~\ref{ex:so3}, and $v_E = R (\omega^\times - \omega^\times_d) \in \tT_R \SO(3)$.
Note that in \citep{bulloGeometricControlMechanical2005}, $K(v_E)$ is used purely as a velocity error term in the appropriate Lyapunov function and there is no interpretation as kinetic energy of any error system.

Recalling Example~\ref{ex:so3_stab}, we use the navigation function $\Upsilon (R_E)$ as the configuration error function and note that it satisfies \citep[Proposition 11.31]{bulloGeometricControlMechanical2005}.
Applying the formulae in \citep{bulloGeometricControlMechanical2005} one obtains a control
\begin{align*}
\tau^\text{GT} = - K_d \tilde{\omega} - R_d^\top (K_p R_E - R_E^\top K_p^\top)^\vee + \inert (\nabla_\omega \omega_d + \dot{\omega}_d),
\end{align*}
where $\nabla_\omega \omega_d$ is the Levi-Civita connection corresponding to the kinetic energy metric.
The feedforward terms $\nabla_\omega \omega_d + \dot{\omega}_d$ arrive naturally by taking the covariant derivative of $v_E$ along $\dot{R}$ and left-trivialising.
For $\SO(3)$ with inertia $\inert$ one has
\begin{align*}
\nabla_{\omega} \omega_d \coloneqq \frac{1}{2} \omega \times \omega_d + \frac{1}{2} \inert^{-1} [\omega_d \times \inert \omega] + \frac{1}{2} \inert^{-1} [\omega \times \inert \omega_d].
\end{align*}
Expanding $\nabla_\omega \omega_d$ and $\dot{\omega}_d$, the GT control can be written
\begin{align}
\tau^\text{GT} &=  \tau_d  + \frac{1}{2} \inert (\omega \times \omega_d) + \frac{1}{2} \omega \times \inert \omega_d + \frac{1}{2} \omega_d \times \inert \omega \notag\\
& \quad - \omega_d \times \inert \omega_d - K_d \tilde{\omega} - R_d^\top (K_p R_E - R_E^\top K_p^\top)^\vee. \label{eq:bullo_control}
\end{align}

Consider the EqT \eqref{eq:so3_control} and GT \eqref{eq:bullo_control} control laws in comparison to each other.
The proportional feedback term $- R_d^\top (K_p R_E - R_E^\top K_p^\top)^\vee$ is the same in both control designs.
This should be expected since the same navigation function is used and this term depends only on the geometry of the group $\SO(3)$ and not on the cotangent or tangent bundle structure.
The damping term, $-R_d^\top K_d R_d \tilde{\omega}$, in the EqT compares to $-K_d \tilde{\omega}$ term in the GT control.
This difference can be explained by noting that the damping gain is defined in spatial coordinates for the EqT, and in body coordinates for the GT.
For a homogeneous gain matrix $K_d = \kappa_d I_3$ then there is no difference in the case of $\SO(3)$.

The feedforward terms vary from each other significantly and warrant a more detailed discussion.
The feedforward terms of the GT controller appear as a consequence of the covariant derivative of $\omega_d$ in the direction of $\omega$, while the feedforward terms in the EqT come from the time-varying inertia of the error system.
By inspection the feedforward terms for the EqT \eqref{eq:so3_control} and GT \eqref{eq:bullo_control} controls are
\begin{align*}
\tau^\text{EqT}_\text{ff} &:= \tau_d
+ \tilde{\omega} \times \inert  \omega_d
+  \omega_d \times  \inert \tilde{\omega} \\
\tau^\text{GT}_\text{ff} &:= \tau_d
+ \frac{1}{2} \tilde{\omega} \times \inert \omega_d
+ \frac{1}{2} \omega_d \times \inert \tilde{\omega}
+ \frac{1}{2} \inert (\omega \times \omega_d).
\end{align*}
In both the EqT and GT control, the Lyapunov function depends on the control through a left trivialised power term $\langle \tilde{\omega}, \tilde{\tau}\rangle = \tilde{\omega}^\top \tilde{\tau}$.
Computing this power term for the EqT and GT separately yields
\begin{align*}
\tilde{\omega}^\top (\tau^\text{GT}_\text{ff}& -\tau_d)\\
&= \tilde{\omega}^\top \left(\frac{1}{2} \tilde{\omega} \times \inert \omega_d
+ \frac{1}{2} \omega_d \times \inert \tilde{\omega}
+ \frac{1}{2} \inert (\omega \times \omega_d)\right)\\
&= \frac{1}{2} \tilde{\omega}^\top( \omega_d \times \inert \tilde{\omega})
+ \frac{1}{2} (\inert \tilde{\omega})^\top(\omega \times \omega_d)\\
&= \frac{1}{2} (\inert \tilde{\omega})^\top (\tilde{\omega} \times \omega_d )
+ \frac{1}{2} (\inert \tilde{\omega})^\top(\omega \times \omega_d)\\
&= (\inert \tilde{\omega})^\top (\omega \times \omega_d)
\end{align*}
and
\begin{align*}
\tilde{\omega}^\top (\tau^\text{EqT}_\text{ff} -\tau_d) &=
\tilde{\omega}^\top \left(\tilde{\omega} \times \inert  \omega_d
+  \omega_d \times  \inert \tilde{\omega} \right)\\
&=  \tilde{\omega}^\top (\omega_d \times  \inert \tilde{\omega})\\
&= (\inert \tilde{\omega})^\top (\tilde{\omega} \times \omega_d)\\
&= (\inert \tilde{\omega})^\top (\omega \times \omega_d).
\end{align*}
It follows that the feedforward terms for the EqT and GT controls correspond to the same instantaneous change in the the Lyapunov function.
Although instantaneously equivalent, the different passive gyroscopic forces in the feedforward terms do effect the trajectory of the system and lead to global differences in the convergence.
They also lead to significant differences in the magnitude of control action required.

Before discussing the difference between $\tau^\text{EqT}$ and $\tau^\text{GT}$ in more detail, there are two alternative controls to $\tau^{\text{EqT}}$ that are suggested by the analysis undertaken above.
Firstly, the gyroscopic terms could be removed entirely from the control, since they do not correspond to instantaneous change in energy of the error system:
\begin{align*}
    \tau^{\text{nog}} &= \tau_d +  \omega_d \times  \inert \tilde{\omega}\\
&\qquad - R_d^\top K_d R_d \tilde{\omega} - R_d^\top (K_p R_E - R_E^\top K_p^\top)^\vee.
\end{align*}
Secondly, in showing the correspondence between $\tau^{\text{GT}}$ and $\tau^{\text{EqT}}$ we proved that $\langle \tilde{\omega}, \omega_d \times \inert \tilde{\omega} \rangle
=
\langle \tilde{\omega},\inert (\omega \times \omega_d)\rangle $.
This suggests that the control
\begin{align*}
    \tau^{\text{asym}} &= \tau_d +  \tilde{\omega} \times \inert \omega_d + \frac{1}{2} \omega_d \times \inert \tilde{\omega} + \frac{1}{2} \inert (\omega \times \omega_d)\\
&\qquad - R_d^\top K_d R_d \tilde{\omega} - R_d^\top (K_p R_E - R_E^\top K_p^\top)^\vee
\end{align*}
is of interest to consider.
This correspondence is associated with the symmetrisation of the inertia \eqref{eq:S_sym} and can be applied to any Lie-Poisson system leading to a general control of the form $\tau^\text{asym}$.

All four of these controls are stabilising and are compared by Monte Carlo simulation in Figure \ref{fig:so3_error_orthogonal}.
The trajectory to be the tracked is the sinuoidal trajectory defined by $R_d(0) = \Id, \omega_d (0) = 0,\tau_d = (\cos(t), \sin(t), \cos(t) \sin(t))$.
For each iteration, the initial conditions for the actual system are perturbed to $R(0) = R_d(0) R(\theta, \phi, \psi)$, where the Euler angles $\theta, \phi, \psi$ are sampled uniformly from the interval $[-\pi, \pi]$.
The initial system angular velocity is perturbed by $\omega(0) = \omega_d(0) + \omega_p$, where $\omega_p$ is sampled from a normal distribution of mean $0$ and standard deviation $1.0$ in all directions.
The moment of inertia is chosen to be that of a model fixed-wing, which has light roll-yaw coupling \citep{burstonReverseEngineeringFixed2014}
\begin{align*}
    \inert = \begin{pmatrix}
        0.824 & 0 & 0.12 \\ 0 & 1.135 & 0 \\ 0.12 & 0 & 1.759
    \end{pmatrix}.
\end{align*}
The proportional gain is set to $K_p = \Id_3$ and the Rayleigh dissipation term $K_d = \frac{1}{2} \Id_3$ is chosen to be homogeneous, removing any difference due to damping.
The differences in the trajectories are purely due to the different choices of feedfoward controls.
Clearly, while the term $\tilde{\omega} \times \inert \omega_d$ is gyroscopic, it significantly effects the input magnitude required.
The EqT and GT controls appear to be superior to the two approximations discussed as variations.
It appears that the EqT control requires less control action for large error (the first 5 seconds of the response) but leads to slightly slower attitude tracking than the GT control.
Beyond these fairly general observations, the authors do not have any further insight and we make no claim of the superiority of either control in practice.
We note furthermore, that these insights are based on a single simple (albeit a Monte-Carlo simulation) example and care should be taken to read too much into the results.

\begin{figure}[!tb]
   \includegraphics[width=1.0\linewidth]{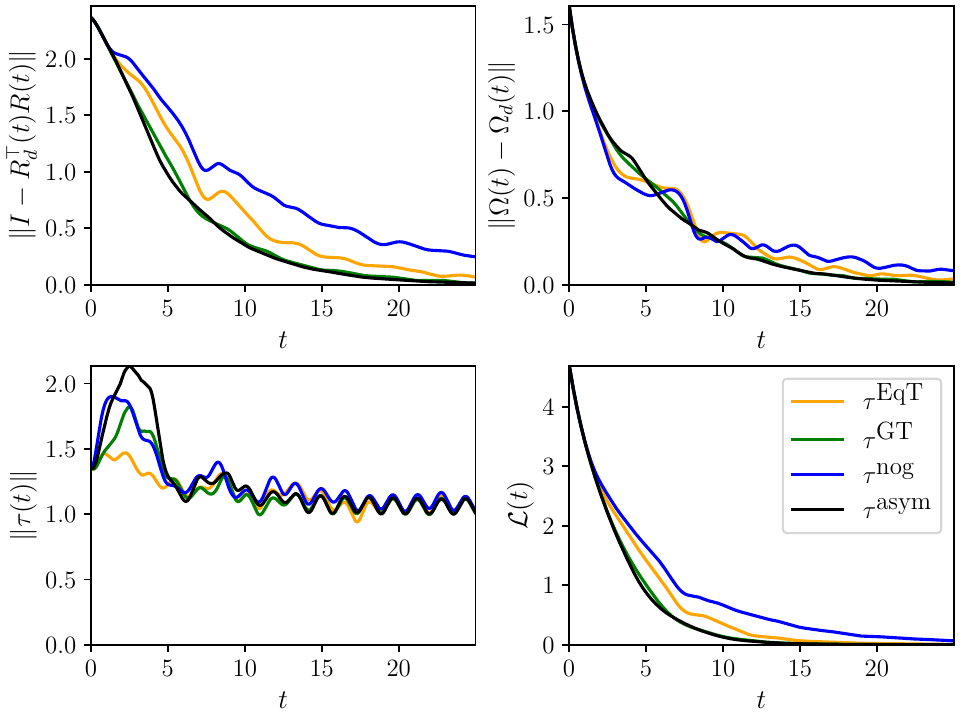}
   \centering
   \caption{$\SO(3)$ tracking error averaged over 200 iterations.}
   \label{fig:so3_error_orthogonal}
\end{figure}

\section{Conclusion}
In this paper, we have addressed the problem of trajectory tracking for general mechanical systems on Lie groups.
To do this, we first addressed the stabilisation problem for a class of systems with time-dependent inertias.
Then, we extended the natural symmetry of the configuration space to a semi-direct product symmetry on the phase space $\grpG \times \gothg^\ast$.
We have shown that the Lie-Poisson dynamics are equivariant with respect to this symmetry, leading to a natural error construction for trajectory tracking.
Equivariance of the dynamics implies that the error system is itself a Lie-Poisson system and thus the tracking problem can be reduced to the stabilisation problem.
To verify the approach, we explored an example system on $\SO(3)$, giving explicit constructions and verifying the control approach in simulation.

\section*{Acknowledgement}
This research was supported by the Australian Research Council through Discovery Grant DP210102607 `` Exploiting the Symmetry of Spatial Awareness for 21st Century Automation''.

\bibliographystyle{plainnat}        
\bibliography{references_archive}           

\appendix

\section{Semidirect structure}
\label{sec:app_semidirect}
We compute some useful identities for the semidirect group $\Gstar$.
Writing $Y \coloneqq (Q_Y, P_Y) \in \Gstar$ and \hbox{$U = (V_U, P_U) \in \gothg \times \gothg^\ast$}, one can also verify the following identities
\begin{align}
    \tD \tL_Y[U] & = \ddt|_{t = 0} (Q_Y, P_Y) (\exp(t V_U), t P_U) \notag                             \\
               & =  (Q_Y V_U, \ad^\ast_{V_U}  P_Y + P_U) \label{eq:DL}                              \\
    \tD R_Y[U] & = \ddt|_{t = 0} ( \exp(t V_U), t P_U) (Q_Y, P_Y) \notag                            \\
               & = (V_U Q_Y, \Ad^\ast_{Q_Y} P_U) \label{eq:DR}                                      \\
    \Ad_Y U    & = \tD \tR_{Y^{-1}}[\tD \tL_Y[U]] \notag                                                \\
               & = (\Ad_{Q_Y} V_U , \Ad^\ast_{Q^{-1}_Y} (\ad^\ast_{V_U}  P_Y + P_U)). \label{eq:Ad}
\end{align}
Note that $\Ad_Y U$ is the adjoint on $\Gstar$ while $\Ad_{Q_Y} V_U$ and  $\Ad^\ast_{Y_Q} Y_P$ is the adjoint on $\grpG$.
Writing $W = (V_W, P_W) \in \gothg \times \gothg^\ast$ then, one also has
\begin{align}
    \ad_W U & = \ddt|_{t=0} \Ad_{(\exp(t V_W), t P_W)} U \notag                         \\
            & = (\ad_{V_W} V_U, \ad^\ast_{V_U} P_W - \ad^\ast_{V_W} P_U). \label{eq:ad}
\end{align}

\end{document}